\theoremstyle{definition}
\newtcbox{\mymath}[1][]{%
    nobeforeafter, math upper, tcbox raise base,
    enhanced, colframe=blue!30!black,
    colback=blue!30, boxrule=1pt,
    #1}
\newtheorem{problem}{Problem}
\newcommand*{\rom}[1]{\expandafter\@slowromancap\romannumeral #1@}
\newcommand{\maj}{\mathrm{maj}}
\newcommand{\hide}[1]{} 
\newtheorem{lemma}{Lemma}
\newtheorem{theorem}{Theorem}
\newcommand{\cc}{\text{Correlation Clustering}\xspace}
\newcommand{\cccc}{\text{2-Correlation-Clustering}\xspace}
\def\e{\epsilon}
\def\hT{\widehat{T}}
 \def\r{\rho}
\def\hD{\widehat{D}}
\def\g{\gamma}
\newcommand{\beql}[1]{\begin{equation}\label{#1}}
\newcommand{\beq}[1]{\begin{equation}\label{#1}}
\newcommand{\eeq}{\end{equation}}
\newcommand{\bfrac}[2]{\left(\frac{#1}{#2}\right)}
\newcommand{\brac}[1]{\left(#1\right)}
\def\a{\alpha}
\newcommand{\field}[1]{\mathbb{#1}} 
\newcommand{\Prob}[1]{\ensuremath{{\bf{Pr}}\left[{#1}\right]}}
\newcommand{\Mean}[1]{\ensuremath{{\mathbb E}\left[{#1}\right]}}
\newcommand{\whp}{\textit{whp}\xspace}
\newcommand{\spara}[1]{\smallskip\noindent{\bf #1}}
\begin{document}
\title{Predicting Positive and Negative Links with Noisy Queries: \\   Theory \& Practice}

\author{
Charalampos E. Tsourakakis\thanks{Boston University, ISI Foundation  \href{mailto:ctsourak@bu.edu}{babis@ctsourak@bu.edu} }
\and
Michael Mitzenmacher\thanks{Harvard University, \href{mailto:michaelm@eecs.harvard.edu}{michaelm@eecs.harvard.edu}}
\and 
Kasper Green Larsen\thanks{Aarhus University, \href{mailto:larsen@cs.au.dk}{larsen@cs.au.dk} }
\and 
Jaros{\l}aw B{\l}asiok\thanks{Harvard University, 
\href{mailto:jblasiok@g.harvard.edu}{jblasiok@g.harvard.edu}}
\and
 Ben Lawson\thanks{ Boston University, 
 \href{mailto:balawson@bu.edu}{balawson@bu.edu}} 
 \and
 Preetum Nakkiran\thanks{Harvard University, 
 \href{mailto:preetum@cs.harvard.edu}{preetum@cs.harvard.edu}}
 \and 
 Vasileios Nakos\thanks{Harvard University, \href{mailto:vasileiosnakos@g.harvard.edu}{vasileiosnakos@g.harvard.edu}}
 }

\hypersetup{
	pdftitle = {Predicting Positive and Negative Links with Noisy Queries: Theory \& Practice},
	pdfauthor = {Tsourakakis et al.}
} 

\maketitle

\begin{abstract}
Social networks involve both positive and negative relationships, which can be captured in signed graphs.  The {\em edge sign prediction problem} aims to predict whether an interaction between a pair of nodes will be positive or negative.  We provide theoretical results for this problem that motivate natural improvements to recent heuristics.

The edge sign prediction problem is related to correlation clustering;  a positive relationship means being in the same cluster.  We consider the following model for two clusters: we are allowed to query any pair of nodes whether they belong to the same cluster or not, but the answer to the query is corrupted with some probability $0<q<\frac{1}{2}$. Let $\delta=1-2q$ be the bias. We provide an algorithm that recovers all signs correctly with high probability in the presence of noise  with $O(\frac{n\log n}{\delta^2}+\frac{\log^2 n}{\delta^6})$ queries. This is the best known result for this problem  for all but tiny $\delta$, improving on the recent work of Mazumdar and Saha \cite{mazumdar2017clustering}.  We also provide an algorithm that performs $O(\frac{n\log n}{\delta^4})$ queries, and uses breadth first search as its main algorithmic primitive. While both the running time and the number of queries for this algorithm are sub-optimal, our result relies on novel theoretical techniques, and naturally suggests the use of edge-disjoint paths as a feature for predicting signs in online social networks. Correspondingly, we experiment with using edge disjoint $s-t$ paths of short length as a feature for predicting the sign of edge $(s,t)$  in real-world signed networks. Empirical findings suggest that the use of such paths improves the classification accuracy, especially for pairs of nodes with no common neighbors.

\end{abstract}

\section{Introduction}
\label{sec:introduction}
With the rise of social media, where both positive and negative interactions take place, signed graphs, whose study was initiated by Heider, Cartwright, and Harary \cite{cartwright1956structural,heider1946attitudes,harary1953notion},
have become prevalent in graph mining.  A key graph mining problem is the {\em edge sign prediction problem}, that aims to predict whether an interaction between a pair of nodes will be positive or negative \cite{leskovec2010predicting,leskovec2010signed}. Recent works have developed numerous heuristics for this task that perform relatively well in practice \cite{leskovec2010predicting,leskovec2010signed}. 

In this work we propose a theoretical model for the edge sign prediction problem that is inspired by active learning \cite{settles2010active}, and the famous {\em balance theory}: ``the friend of my enemy is my enemy'', or ``the enemy of my enemy is my friend'' \cite{cartwright1956structural,easley2010networks,heider1946attitudes,enemy}.
Specifically, we model the edge sign prediction problem as a noisy correlation clustering problem \cite{bonchi2014correlation,mathieu2010correlation,makarychev2015correlation}, where we are able to query a pair of nodes $(u,v)$ to test whether they belong to the same cluster (edge sign $+1$) or not (edge sign $-1$).  The query fails to return the correct answer with some probability $0<q<\frac{1}{2}$.  Correlation clustering is a basic data mining primitive with a large number of applications ranging from social network analysis \cite{harary1953notion,leskovec2010predicting} to computational biology \cite{hou2016new}.  The details of our model follow. 

\spara{Model.} Let $V=[n]$ be the set of $n$ items that belong to two clusters. Set $\sigma:V \rightarrow \{-1,+1\}$, and let  $R = \{v \in V(G): \sigma(v) = -1 \}$ and $B = \{v \in V(G): \sigma(v) = +1 \}$ be the sets/groups of red and blue nodes respectively, where  $0 \leq |R| \leq n$.  For any pair of nodes $\{u,v\}$ define $\tau(u, v) = \sigma(u)\sigma(v) \in \{\pm 1\}$ (i.e., $\tau(u,v) = -1$, if $u$ is reported to be in the different cluster than $v$). The coloring function $\sigma$ is unknown and we wish to recover the two sets $R,B$  by querying pairs of items. (We need not recover the labels, just the clusters.) 
Let $\eta_{u, v} \in \{\pm 1\}$ be iid noise in the edge observations, with $\Mean{\eta_{u, v}}=\delta$ for all pairs $u,v\in V$. The oracle returns 
$$\tilde{\tau}(u,v) = \sigma(u)\sigma(v)\eta_{u, v}.$$
Equivalently, for each query we receive the correct answer with probability $1-q=\frac{1}{2}+\frac{\delta}{2}$, where $q>0$ is the corruption probability.  Our goal is answer the following question.

\begin{tcolorbox}
\begin{problem}
\label{clustering-problem} 
\noindent Can we recover the clusters {\em efficiently} with high probability by performing a  {\em small number of queries}? 
\end{problem}
\end{tcolorbox}

The constraint of querying a pair of nodes {\em only once}  in the presence of noise appears  not only in settings where 
a repeated query is constrained to give the same answer but naturally in more complex settings. For example, in crowd-sourcing applications repeated querying does not help much in reducing errors \cite{mazumdar2016clustering,mazumdar2017clustering,verroios2015entity}, and in biology testing for one out of several millions of potential interactions in the   human protein-protein interaction network involves both experimental noise, and a high cost.

\spara{Main results.} Our two theoretical results show that we can recover the clusters $(R,B)$ with high probability\footnote{An event $A_n$ holds with high probability ({\it whp}) 
if $\lim\limits_{n \rightarrow +\infty} \Prob{A_n}=1$.} in polynomial time. Our first result is stated as the next theorem.

\begin{theorem}
\label{mainthrm1} 
There exists a polynomial algorithm with query complexity $O(\frac{n \log n}{\delta^2}+\frac{\log^2 n}{\delta^6})$ that returns both clusters of $V$ \whp. 
\end{theorem}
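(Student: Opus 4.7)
The plan is a two-stage pivot-and-extend strategy whose two phases match the two summands of the claimed query bound.

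\textbf{Stage~1 (learning a small pivot set).} First I would pick two disjoint subsets of $V$: a pivot set $P$ with $|P|=C_1\delta^{-2}\log n$, and a witness set $W$ with $|W|=C_2\delta^{-4}\log n$, for sufficiently large constants $C_1,C_2$. Query every pair in $P\times W$ (costing $|P||W|=O(\delta^{-6}\log^2 n)$ queries), fix an arbitrary anchor $u_0\in P$ with $\hat\sigma(u_0):=+1$, and for each $v\in P\setminus\{u_0\}$ define
\[
\hat\sigma(v):=\sgn\left(\sum_{w\in W}\tilde\tau(u_0,w)\,\tilde\tau(w,v)\right).
\]
Each summand equals $\tau(u_0,v)\,\eta_{u_0,w}\,\eta_{w,v}\in\{\pm 1\}$ and, by independence of the noises, has mean $\tau(u_0,v)\delta^2$, so a Chernoff bound yields $\Prob{\hat\sigma(v)\ne\sigma(u_0)\sigma(v)}\le \exp(-\Omega(|W|\delta^4))=n^{-\Omega(C_2)}$. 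A union bound over $v\in P$ shows that the event $A$ that $\hat\sigma=\sigma(u_0)\cdot\sigma$ on all of $P$ holds with high probability.

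\textbf{Stage~2 (extending to the rest of $V$).} For each $v\in V\setminus P$ I would query $(v,u)$ for every $u\in P$ (reusing the Stage~1 queries when $v\in W$) and set $\hat\sigma(v):=\sgn\bigl(\sum_{u\in P}\tilde\tau(v,u)\,\hat\sigma(u)\bigr)$. The number of \emph{new} queries here is at most $(n-|P|-|W|)|P|=O(n\delta^{-2}\log n)$. On the event $A$ one has $\hat\sigma(u)=\sigma(u_0)\sigma(u)$ for every $u\in P$, and hence
\[
\sum_{u\in P}\tilde\tau(v,u)\,\hat\sigma(u)\;=\;\sigma(v)\sigma(u_0)\sum_{u\in P}\eta_{v,u}.
\]
Since $\{\eta_{v,u}\}_{u\in P}$ are independent $\pm 1$ variables with mean $\delta$, Chernoff gives $\Prob{\sum_u\eta_{v,u}<0}\le \exp(-\Omega(|P|\delta^2))=n^{-\Omega(C_1)}$, and a union bound over $v\in V\setminus P$, together with $\Prob{\neg A}=o(1)$, yields $\hat\sigma=\pm\sigma$ with high probability. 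Adding the two query budgets gives $O(n\delta^{-2}\log n+\delta^{-6}\log^2 n)$.

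\textbf{Where the $\delta^{-6}$ comes from, and the main obstacle.} The key conceptual point is that each Stage~1 vote is the \emph{product} of two independent noisy answers, so its effective bias drops from $\delta$ to $\delta^2$; this forces $|W|=\Theta(\delta^{-4}\log n)$ and produces the $\delta^{-6}\log^2 n$ term. The main subtlety is that vertices in $W$ appear as witnesses in Stage~1, so at first glance the Stage~2 sum for $v\in W$ is correlated with the Stage~1 output $\hat\sigma|_P$. The identity displayed above neutralizes this: on $A$, the Stage~2 sum collapses to $\sigma(v)\sigma(u_0)\sum_u\eta_{v,u}$, so the bad event ``Stage~2 wrong for $v$'' intersected with $A$ is upper-bounded by a single clean Chernoff tail on a sum of independent noises.
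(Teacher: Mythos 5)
Your proposal is correct and follows essentially the same two-stage pivot-and-extend strategy as the paper's {\sc Pythia2Truth} algorithm (small pivot set classified via a witness set at cost $O(\delta^{-6}\log^2 n)$; then a linear sweep over $V$ at cost $O(n\delta^{-2}\log n)$). The differences are stylistic: you anchor every $v\in P$ to a single vertex $u_0$ and do a \emph{signed} majority $\sgn\!\left(\sum_{u\in P}\tilde\tau(v,u)\hat\sigma(u)\right)$ over all of $P$ in Stage~2, whereas the paper classifies all $\binom{|A|}{2}$ pairs within its pivot set $A$, invokes a pigeonhole argument to extract the larger single-cluster clique $C\subseteq A$ with $|C|=\Omega(\delta^{-2}\log n)$, and then uses an \emph{unsigned} majority over $C$. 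Both versions give the same query bound; your signed-majority variant is marginally cleaner in that it sidesteps the pigeonhole step and needs no assumption on cluster balance, and you also spell out, more carefully than the paper does, the correlation issue for $v\in W$ (whose Stage~2 sum reuses Stage~1 noises) and correctly dispatch it with the intersection bound $\Prob{\text{bad for }v\cap A}\leq\Prob{\sum_{u\in P}\eta_{v,u}\leq 0}$.
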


Our algorithm improves the current state-of-the-art due to Mazumdar and Saha \cite{mazumdar2017clustering}. Specifically, their information theoretical optimal algorithm that performs $O( \frac{n \log n}{\delta^2})$ queries  requires quasi-polynomial runtime and is unlikely to be improved assuming the planted clique conjecture. On the other hand, their efficient poly-time algorithms require $O(\frac{n \log n}{\delta^4})$ queries.  Our algorithm is optimal for all but tiny $\delta$, i.e., as long as the first term $\frac{n \log n}{\delta^2}$ dominates (asymptotically) the second term $\frac{\log^2 n}{\delta^6}$.

We also provide an additional algorithm that is sub-optimal with respect to both  the number of queries and the  runtime.  Nonetheless, we believe that our algorithm is of independent interest (i) for the novelty of the techniques we develop, and (ii) for the insights that suggest the use of signed edge-disjoint paths as features for predicting whether an interaction between two agents in an online social network will be positive or negative. Our second algorithm is non-adaptive, i.e., it performs all queries upfront, in contrast to our first algorithm. Also, the algorithm itself is simple, using breadth first search as its main algorithmic primitive.   Our result is stated as  Theorem~\ref{thm:thrm1}.  

\begin{theorem}
\label{thm:thrm1}
Let  $\Delta = O(\mathrm{max}\{\frac{1}{\delta^4}\log n,  (\frac{1}{\delta} )^{4 + \frac{2+2\epsilon}{\epsilon} }\})$, and $\epsilon=\frac{1}{\sqrt{ \log \log n}}$. There exists a polynomial time algorithm that performs $\Theta(n \Delta)$ edge queries and recovers the clustering $(R,B)$ \whp for any bias $0< \delta = 1-2q<1$.   
\end{theorem}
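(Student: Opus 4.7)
The goal is a non-adaptive algorithm that fires off $\Theta(n\Delta)$ queries up front and then post-processes to recover $(R,B)$. The plan is to take the query set to be the edge set of a random $\Delta$-regular graph $H$ on $V$, query every edge of $H$, and then recover the sign of each vertex relative to a fixed pivot $r\in V$ by a BFS-based majority vote over sign-products of many edge-disjoint short paths. Sampling $H$ uniformly at random from $\Delta$-regular graphs on $V$ (or via the configuration model) uses $\Theta(n\Delta)$ queries; this is non-adaptive since $H$ is chosen before any answer is seen.

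For each $v\in V\setminus\{r\}$, run BFS in $H$ to extract a family $\mathcal{P}_v$ of $k$ edge-disjoint $r$-$v$ paths, each of length at most $L$. For a path $P=(r=u_0,u_1,\ldots,u_\ell=v)$ define the \emph{path sign} $s(P)=\prod_{i=1}^{\ell}\tilde\tau(u_{i-1},u_i)$. Since $\tilde\tau(u,w)=\tau(u,w)\eta_{u,w}$ and $\tau$ telescopes along $P$, we get $s(P)=\tau(r,v)\prod_{i=1}^{\ell}\eta_{u_{i-1}u_i}$; this is $\pm 1$-valued with mean $\tau(r,v)\delta^{\ell}$, and the $s(P)$'s across $P\in\mathcal{P}_v$ are mutually independent by edge-disjointness. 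Set $\hat\tau(v)=\sgn\bigl(\sum_{P\in\mathcal{P}_v}s(P)\bigr)$. A Hoeffding bound yields
\[
\Pr\!\left[\hat\tau(v)\neq\tau(r,v)\right]\le\exp\bigl(-\Omega(k\,\delta^{2L})\bigr),
\]
and a union bound over the $n-1$ choices of $v$ succeeds as long as $k\,\delta^{2L}=\Omega(\log n)$. The two clusters are then read off as $\{v:\hat\tau(v)=+1\}$ and $\{v:\hat\tau(v)=-1\}$.

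The heart of the proof is therefore a \emph{short-path packing lemma} for random $\Delta$-regular graphs: with high probability over $H$, every pair of vertices is joined by $k$ edge-disjoint paths each of length at most $L$, for suitable $(k,L)$. I would prove this by combining (i) the $\Omega(\Delta)$ edge-expansion of a random $\Delta$-regular graph with (ii) a Menger-style argument applied inside BFS balls of radius $\approx L/2$ around each endpoint, showing that any edge cut of size $<k$ would force a small set to violate expansion. The parameter $L$ must be large enough that the two BFS balls meet along enough internally disjoint routes, and small enough that the per-path bias $\delta^{L}$ is not catastrophically degraded; choosing $L=\Theta(1/\epsilon)$ with $\epsilon=1/\sqrt{\log\log n}$ is precisely where these two pressures balance, and pushing this through the inequality $k\delta^{2L}\ge\log n$ together with $k=\Omega(\Delta)$ recovers the stated bound $\Delta=O(\max\{\delta^{-4}\log n,\,(1/\delta)^{4+(2+2\epsilon)/\epsilon}\})$.

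The main obstacle is the short-path packing lemma. Menger's theorem alone cheaply gives $\Omega(\Delta)$ edge-disjoint paths, but they may be arbitrarily long, and once $L$ grows past $\Theta(\log n/\log\log n)$ the per-path bias $\delta^{L}$ becomes too small for the Hoeffding step to close. Establishing that \emph{short} edge-disjoint paths of length $\le L$ exist in a sparse random regular graph requires more care: one must handle the few short cycles and local bottlenecks that random $\Delta$-regular graphs genuinely possess, and quantify the branching of BFS out to depth $\Theta(L)$ precisely enough that the two BFS balls intersect in enough internally-disjoint witnesses. Everything else (Hoeffding concentration, telescoping of the sign product, and the final union bound) is routine.
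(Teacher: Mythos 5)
There is a genuine gap, and it is not a missing detail in the ``short-path packing lemma'' — the single-level majority-over-paths strategy you propose cannot work for these parameters no matter how good a packing lemma you had.

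In a random graph of average degree $\Delta = O(\log n / \delta^4)$, the distance between a typical pair $r,v$ is $L_0 = \Theta(\log n / \log\log n)$ (this is the diameter, and almost all pairs are at roughly this distance). Your proposed $L = \Theta(1/\epsilon) = \Theta(\sqrt{\log\log n})$ is far below $L_0$, so at that radius the BFS balls of $r$ and $v$ are disjoint for almost every pair, and there are \emph{zero} $r$-$v$ paths of length $\le L$, let alone $k$ edge-disjoint ones. If instead you take $L \gtrsim L_0$ so that paths exist, the number of edge-disjoint $r$-$v$ paths is still trivially bounded by $k \le \min(\deg r, \deg v) = O(\Delta)$, while the per-path bias collapses to $\delta^{L_0} = n^{-\Theta(1/\log\log n)}$. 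Plugging into your own Hoeffding bound, the exponent $k\,\delta^{2L_0} = O(\log n)\cdot n^{-\Theta(1/\log\log n)} \to 0$, so the failure probability tends to $1/2$, not to $o(1/n)$. There is no $L$ at which the two ``pressures'' balance; the evidence per path decays exponentially in $L$ but the number of available disjoint paths saturates at $\Delta$.

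The idea you are missing is the paper's recursive, tree-structured majority. Rather than collecting disjoint $r$-$v$ paths and voting once, the algorithm grows two isomorphic BFS trees $T_x,T_y$ of depth $\epsilon L$ from the endpoints, links matched leaves with edge-disjoint paths of length roughly $L$, and then takes majority votes level-by-level from the leaves up to the roots. The Majority Bias Lemma (proved by Fourier analysis) shows that a majority over $\Delta$ independent $\pm 1$ coins with bias $\ge \gamma$ has bias $\min(\Omega(\gamma\sqrt{\Delta}),\Omega(1))$; composed with two fresh noisy tree-edges, each level boosts the bias by a factor $\Theta(\delta^2\sqrt{\Delta})$. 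Starting from the leaf-level bias $\delta^{(1+O(\epsilon))L}$ and applying $\epsilon L$ rounds of amplification yields $(\delta^2\sqrt{\Delta})^{\epsilon L}\,\delta^{(1+O(\epsilon))L} = \Omega(1)$ precisely when $\Delta \ge (1/\delta)^{4 + (2+2\epsilon)/\epsilon}$, which is where that term in the bound comes from. It is this multiplicative, per-level amplification — unavailable to a flat one-shot majority — that lets the signal survive the $\delta^{\Theta(L)}$ attenuation. Your telescoping identity, the Hoeffding/Chernoff step at the very top of the tree, and the union bound are all fine; but they need to sit on top of a hierarchical estimator, not a single pool of diameter-length paths.
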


Notice that when $\delta$ is constant, asymptotically $O(\frac{n\log n}{\delta^4})$  queries suffice to recover the clustering \whp. Our algorithm is path based, i.e., in order to predict the sign of an edge $(s,t)$, it  carefully creates sufficiently many paths between $s,t$.  While our algorithm (see Section~\ref{sec:proposed} for the details) is intuitive, its analysis involves mathematical arguments that may be of independent interest. Our analysis improves significantly a previous result by the first two authors \cite{mitzenmacher2016predicting}.

Inspired by our path-based algorithm, we use edge-disjoint  $s-t$ paths of short length in a heuristic way to predict the sign of an edge $(s,t)$ in a {\em given} signed network.  Specifically, we perform logistic regression using edge-disjoint  $s-t$ paths of short length as a class of  features in addition to the features introduced in 
\cite{leskovec2010predicting}
to predict positive and negative links in online social networks. Our experimental findings  across a wide variety of real-world signed networks suggest that such paths provide additional useful information to the classifier, with paths of length three being  most informative. The improvement we observe is significantly pronounced for edges with no common neighbors.

\section{Related Work}
\label{sec:related}
\spara{Clustering with Noisy Queries.} Closest to our work lies the recent work of Mazumdar and Saha \cite{mazumdar2017clustering}. Specifically, the authors study Problem~\ref{clustering-problem} in \cite{mazumdar2017clustering} as well, as well as the more general version where the number of clusters is $k\geq 3$. Each oracle query provides a noisy answer on whether two nodes belong to the same cluster or not.  They provide an algorithm that performs $O(\frac{nk\log n}{\delta^2})$ queries, recovers all clusters of size $\Omega( \frac{\log n}{\delta^2})$ where $k$ is the number of clusters, but whose runtime is quasi-polynomial hence impractical, and  unlikely to be improved under the planted clique hardness assumption. They also design a computationally efficient algorithm that runs in $O(n \log n+k^6)$ time and performs $O(\frac{nk^2 \log n}{\delta^4})$ queries. Finally, for  $k=2$ they provide a non-adaptive algorithm that performs $O(\frac{n \log n}{\delta^4})$ and runs in $O(n \log n)$ time.

\spara{Signed graphs.} Fritz Heider introduced the notion of a signed graph 
 in the context of balance theory \cite{heider1946attitudes}.  The key subgraph in balance theory is the {\em triangle}: any set of three fully interconnected nodes whose product of edge signs is negative is not  balanced. The  complete graph is balanced if every one of its triangles
is balanced. Early work on signed graphs focused on graph theoretic properties of balanced graphs  \cite{cartwright1956structural}.  Harary proved the famous balance theorem which characterizes balanced graphs as graphs with two groups of nodes \cite{harary1953notion}. 

\spara{Predicting signed edges.} Since the rise of social media, there has been a surging interest in understanding how users interact among each other.  Leskovec, Huttenlocher, and Kleinberg \cite{leskovec2010predicting}   formulate the edge sign prediction problem as follows:  given a social network $G(V,E)$ with signs on all its edges except for the sign $sgn(x,y)$ on the edge from node $x$ to node $y$, how reliably can we infer $sgn(x,y)$ from the rest of the network? In their original work, Leskovec et al. proposed a machine learning framework to solve the edge sign prediction problem. They trained a logistic regression classifier using 23 features in total. 
Specifically, the first seven features are the following: positive and negative out-degrees $d^+_{out}(x), d^-_{out}(x)$ of node $x$, positive and negative in-degrees   $d^+_{in}(y), d^-_{in}(y)$ of node $y$, the total out- and in-degrees  $d_{out}(x), d_{in}(y)$ of nodes $x,y$ respectively, and the number of common neighbors (forgetting directions of edges) $C(x,y)$ between $x,y$. The quantity $C(x,y)$ was referred to as the {\em embeddedness}  of the edge $x \rightarrow y$ in \cite{leskovec2010predicting}, and we will follow the same terminology.  In addition to these seven features, Leskovec et al. used a 16-dimensional count vector, with one coordinate for each possible triad configuration between $x,y$.   Given a directed edge $(x,y)$ and a third neighbor $v$ connected to both, there are two directions for the edge between $v$ and $x$ and two possible signs for this edge, and similarly for $v$ and $y$, giving 16 possible triads. The 16 possible triads are shown in Table~\ref{tab:triads}.

\small{
\begin{table}[!ht]
\begin{center}
\begin{tabular}{|l|c|c|c|} \hline
Type  & Triad    & Type  & Triad \\ \hline 

1 &$x \xrightarrow[]{+} v, v \xrightarrow[]{+} y$ & 9 &  $x \xrightarrow[]{+} v, v \xleftarrow[]{+} y$ \\ 
2 & $x \xrightarrow[]{+} v, v \xrightarrow[]{-} y$& 10 &  $x \xrightarrow[]{+} v, v \xleftarrow[]{-} y$ \\ 
3 & $x \xrightarrow[]{-} v, v \xrightarrow[]{+} y$& 11 &  $x \xrightarrow[]{-} v, v \xleftarrow[]{+} y$ \\ 
4 &$x \xrightarrow[]{-} v, v \xrightarrow[]{-} y$& 12 &  $x \xrightarrow[]{-} v, v \xleftarrow[]{-} y$ \\ 
5 & $x \xleftarrow[]{+} v, v \xrightarrow[]{+} y$& 13 &  $x \xleftarrow[]{+} v, v \xleftarrow[]{+} y$ \\ 
6 & $x \xleftarrow[]{+} v, v \xrightarrow[]{-} y$& 14 &  $x \xleftarrow[]{+} v, v \xleftarrow[]{-} y$ \\ 
7 & $x \xleftarrow[]{-} v, v \xrightarrow[]{+} y$& 15 &  $x \xleftarrow[]{-} v, v \xleftarrow[]{+} y$ \\ 
8 & $x \xleftarrow[]{-} v, v \xrightarrow[]{-} y$& 16 &  $x \xleftarrow[]{-} v, v \xleftarrow[]{-} y$ \\   \hline
\end{tabular}
\end{center}
\caption{\label{tab:triads} The 16 triads  of     edge $(x \rightarrow y)$. }
\end{table}
}

In the original  work of Leskovec et al. \cite{leskovec2010predicting} the classifier's evaluation is only evaluated on edges whose endpoints have embeddedness at least 25.  However, these kind of thresholds on the embeddedness discard a non-negligible fraction of edges in a graph. For instance, the fraction  of   edges with zero embeddedness is  29.83\%, and 6.23\% in the Slashdot and Wikipedia online social networks (see Table~\ref{tab:datasets}) respectively.  Edges with small embeddedness are ``hard'' to classify, because
triads tend to be a significant feature for sign prediction \cite{leskovec2010predicting}. The lack of common neighbors, and therefore of triads, raises the importance of degree-based features for these edges, and these features are known to introduce some damaging bias, see \cite{chiang2011exploiting} for an explanation.

We will see in Section~\ref{sec:exp}  --perhaps against intuition-- 
that  edge-disjoint paths of length three, may be even more informative than triads. For example, in the Wikipedia social network, if we train a classifier using only triads we obtain 57\% accuracy, and if we train a classifier using only paths of length 3, we obtain 74.06\% accuracy.

\spara{Correlation Clustering.} Bansal et al. \cite{bansal2004correlation} studied Correlation Clustering: given an undirected signed graph  partition the nodes into clusters so that the total number of disagreements is minimized. This problem is NP-hard \cite{bansal2004correlation,shamir2004cluster}. Here, a disagreement can be either a positive edge between vertices in two clusters or a negative edge between two vertices in the same cluster. Note that in \cc the number of clusters is not specified as part of the input. The case when the number of clusters is constrained to be at most two is known as \cccc.

We remark that the notion of {\em imbalance} studied by Harary is  the \cccc cost of the signed graph.  Mathieu and Schudy initiated the study of noisy correlation clustering \cite{mathieu2010correlation}. They develop various algorithms when the graph is complete,  both for the cases of a random and a semi-random model. Later, 
Makarychev,   Makarychev,  and Vijayaraghavan proposed an algorithm for graphs with $O(n\text{poly}\log n)$ edges under a semi-random model \cite{makarychev2015correlation}. For more information on \cc see the recent survey by Bonchi et al. \cite{bonchi2014correlation}.

\spara{Planted bisection model.} The following well-studied bisection model is closely connected to our model. Suppose that there are two groups (clusters) of nodes. A graph is generated as follows: the edge probabilities  are $p$ within each cluster, and $q<p$ across the clusters.  The goal is to recover the two clusters given such a graph. If the two clusters are  balanced, i.e., each cluster has $O(n)$ nodes, then one can recover the clusters \whp, see \cite{mcsherry2001spectral,vu2014simple,abbe2016exact}. 
Hajek, Wu, and Xu proved that when each cluster has $n/2$ nodes (perfect balance), the average degree has to scale as $\frac{\log n}{ (\sqrt{1-q} - \sqrt{q})^2 }$ for exact recovery \cite{hajek2016achieving}. Also, they showed that using semidefinite programming (SDP) exact recovery is achievable at this threshold \cite{hajek2016achieving}.

Notice that if (i) we have two balanced clusters, and (ii) we remove all negative edges from a signed graph generated according to our model, then one can apply such techniques to recover the clusters. We observe that when $\delta \rightarrow 0$ the lower bound of Hajek et al. scales as $O( \frac{\log n}{\delta^2})$.  The techniques we develop in Section~\ref{sec:proposed}  work independently of cluster size constraints.

\spara{Other Techniques.} Chen et al. \cite{chen2014clustering,chen2012clustering}  consider also Model \rom{1} and provide a method that can reconstruct the clustering for random binomial graphs with $O(n \text{poly} \log n)$ edges. Their method exploits low rank properties of the cluster matrix, and requires certain  conditions, including conditions on the imbalance between clusters, see \cite[Theorem 1, Table 1]{chen2012clustering}.  Their method is based on a convex relaxation of a low rank problem. Mazumdar and Saha similarly study clustering with an oracle in the presence of side information, such as a Jaccard similarity matrix
 \cite{mazumdar2016clustering}. 
Cesa-Bianchi et al. \cite{cesa2012correlation} take a learning-theoretic perspective on the problem of predicting signs. They use the correlation clustering objective as their learning bias, and show that the risk of the empirical risk minimizer is controlled by the correlation clustering objective.   Chiang et al. point out that the work of Cand{\`e}s and Tao \cite{candes2006robust} can be used to predict signs of edges, and also provide various other methods, including singular value decomposition based methods, for the sign prediction problem \cite{chiang2014prediction}. The incoherence is the key parameter that determines the number of queries, and is equal to the group imbalance $\tau = \max\limits_{\text{cluster~} C} \frac{n}{|C|}$. The number of queries needed for exact recovery under our Model   is $O(\tau^4 n \log^2{n})$, which is prohibitive when clusters are imbalanced.

\section{Proposed Method}
\label{sec:proposed}
\spara{Pythia2Truth, Theorem~\ref{mainthrm1}.} We describe the algorithm  {\sc Pythia2Truth}  that achieves the guarantees of Theorem~\ref{mainthrm1}. 
The algorithm arbitrarily chooses two node-disjoint sets $A,B \subseteq V$ such that $|A| = \Theta(\frac{\log n}{\delta^2})$ and $|B| = \Theta(\frac{\log n}{\delta^4})$.  Then, it
performs  all possible queries between $A,B$. The total number of queries at this step is $\Theta(\frac{\log^2 n}{\delta^6})$. The algorithm then uses the set of labels $\{ \tilde{\tau}(a,b),\tilde{\tau}(a',b) \}_{b \in B}$ to make a guess $\bar{\tau}(a,a')$ for  $\tau(a,a')$ for each pair $a,a' \in A$. This works as follows: for any given pair $\{a,a'\}$ each $b$ casts a vote $vote(a,a',b)$. Specifically,  $vote(a,a',b)=+1$ if $\tilde{\tau}(a,b)=\tilde{\tau}(a',b)$, and $vote(a,a',b)=-1$ if $\tilde{\tau}(a,b) \neq \tilde{\tau}(a',b)$. The prediction $\bar{\tau}(a,a')$ is $+1$ if the majority of votes $\{ vote(a,a',b) \}_{b \in B}$ is $+1$, and $-1$ otherwise.  

The aforementioned steps ensure that $\bar{\tau}(a,a')  = \tau(a,a')$ for all pairs $a,a' \in A$ \whp. Clearly, there exist at least $\Theta(\frac{\log n}{\delta^2})$ nodes from at least one of the two clusters. This set of nodes is found by finding the largest connected component (that is actually a clique) of the graph induced by the positive edges in $A$. This set $C$ serves as a {\em seed set}. For each node $u \notin C$ we perform all queries $(u,c)$ for each $c \in C$. If the majority of the oracle answers is $+1$ then we add $u$ in $C$. The procedure outputs $C$ and its complement as the true clusters. Now we prove the correctness of our proposed algorithm. First, we prove the following lemma. 

\begin{lemma}
\label{lem1main}  
Let $S\subseteq V$ such that $|S| =  \frac{24 \log n}{\delta^4}$. Consider any pair of nodes $u,v \in V\backslash S$, and let $\bar{\tau}(u,v) =$ majority$( \{ \tilde{\tau}(u,s) \cdot \tilde{\tau}(v,s) \}_{s \in S} )$. Then, $\bar{\tau}(u,v) = \tau(u,v)$ with probability at least $1-\frac{1}{n^3}$. 
\end{lemma}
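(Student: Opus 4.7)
The plan is to reduce the majority-vote event to a sum of independent bounded random variables and apply a Chernoff/Hoeffding bound. The key algebraic observation is that for every $s \in S$, since $\sigma(s)^2=1$,
\[
\tilde{\tau}(u,s)\,\tilde{\tau}(v,s) \;=\; \bigl(\sigma(u)\sigma(s)\eta_{u,s}\bigr)\bigl(\sigma(v)\sigma(s)\eta_{v,s}\bigr) \;=\; \tau(u,v)\cdot \eta_{u,s}\eta_{v,s}.
\]
So each vote equals $\tau(u,v)\cdot Z_s$ where $Z_s := \eta_{u,s}\eta_{v,s} \in \{\pm 1\}$, and the majority prediction $\bar\tau(u,v)$ equals $\tau(u,v)$ if and only if $\sum_{s \in S} Z_s > 0$ (with any convenient tie-breaking; since $|S|$ is even one can absorb ties into the failure event without changing the conclusion).

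Next I would record that the $Z_s$ are i.i.d.\ (since each involves a disjoint pair of noise variables $\eta_{u,s}, \eta_{v,s}$, and the pairs are disjoint as $s$ ranges over $S$, with $u,v \notin S$), and that
\[
\Mean{Z_s} \;=\; \Mean{\eta_{u,s}}\,\Mean{\eta_{v,s}} \;=\; \delta^2,
\]
by independence of the noise entries. So $W := \sum_{s \in S} Z_s$ is a sum of $|S|$ independent $\{\pm1\}$ variables with common mean $\delta^2$, and the failure event is $\{W \le 0\}$, i.e., a deviation of magnitude at least $|S|\delta^2$ below the mean.

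The main step is then a standard Chernoff/Hoeffding estimate. Rewriting $Y_s := (Z_s+1)/2 \in \{0,1\}$ with $\Mean{Y_s} = (1+\delta^2)/2$ and applying Hoeffding's inequality,
\[
\Prob{\bar\tau(u,v)\neq \tau(u,v)} \;\le\; \Prob{\textstyle\sum_s Y_s \le |S|/2} \;\le\; \exp\!\left(-\tfrac{|S|\delta^4}{2}\right).
\]
Plugging in $|S| = 24\log n/\delta^4$ gives $\exp(-12\log n) = n^{-12} \le n^{-3}$, completing the proof. The only subtlety worth being careful about is the independence claim for the $Z_s$'s: this relies on $u,v \notin S$ so that the noise variables entering different $Z_s$'s are disjoint, and I would state this explicitly before invoking the concentration bound. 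Beyond that, the argument is entirely routine.
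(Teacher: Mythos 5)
Your proposal is correct and follows essentially the same route as the paper: both reduce the majority vote to a sum of i.i.d.\ coins, each correct with probability $(1+\delta^2)/2$ (the paper computes this as $(1-q)^2+q^2$, you get it via the factorization $\tilde\tau(u,s)\tilde\tau(v,s)=\tau(u,v)\,\eta_{u,s}\eta_{v,s}$), and then invoke a Chernoff/Hoeffding bound. The only difference is cosmetic — you use additive Hoeffding and get $n^{-12}$, the paper uses a multiplicative Chernoff form and gets a somewhat weaker but still sufficient exponent — and your explicit remark on why the $Z_s$ are independent ($u,v\notin S$) is a nice touch the paper leaves implicit.
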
 

\begin{proof} 
Consider any pair of nodes $u,v \in  V\backslash S$, and let  $X_s(u,v)$ be an indicator random variable for $s \in S$ that is equal to 1 if  the product $\tilde{\tau}(u,s) \cdot \tilde{\tau}(v,s)$ of the two noisy labels $ \tilde{\tau}(u,s), \tilde{\tau}(v,s)$ is the true label  $\tau(u,v)$. Then, 
$ \Prob{ X_s=1 } = (1-q)^2+q^2 = \frac{1+\delta^2}{2}.$ 
 For notation simplicity let $p=\Prob{ X_s=1 }$. Also, we define $X(u,v) = \sum_{s \in S} X_s(u,v)$. Notice that $\bar{\tau}(u,v) = \tau(u,v)$ iff $X(u,v) \geq \frac{|S|}{2}$. Using Chernoff bounds \cite{mitzenmacher2005probability}, we obtain that the probability of misclassification is bounded by

\begin{align*}
\Prob{ X(u,v) < \frac{|S|}{2} } &= \Prob{ X(u,v) < \frac{p|S|}{2p} } = \Prob{ X(u,v) < \bigg( 1- (1-\frac{1}{2p}) \bigg) p|S| } \\    
&\leq \exp\Big( - \frac{(2p-1)^2}{8p^2} \frac{24 \log n}{\delta^4} p\Big) = \exp\Big( - \frac{\delta^4}{4(1+\delta^2)} \frac{24 \log n}{\delta^4} \Big) <\frac{1}{n^3}. 
\end{align*}
\end{proof}

\begin{algorithm*}[t]
\caption{\label{alg1b}  {\sc Pythia2Truth}($V$) } 
 \begin{algorithmic}  
\STATE Choose arbitrarily $A, B \subseteq V$   two disjoint sets of nodes, such that $|A| = \frac{48\log n}{\delta^2}$, and $|B|= \frac{24 \log n }{\delta^4}$. 
\STATE  Perform all $\Theta(\frac{\log^2 n}{\delta^6})$ queries among $A,B$. 
\FOR{each pair  $a,a' \in A$}
\STATE $\text{counter}_{a,a'} \leftarrow 0$ 
\FOR{each $b \in B$}  
\IF{ $\tilde{\tau}(a,b) = \tilde{\tau}(a',b)$ }
\STATE $\text{counter}_{a,a'} \leftarrow \text{counter}_{a,a'}+1$ 
\ENDIF
\ENDFOR
\IF{$\text{counter}_{a,a'} \geq \frac{|B|}{2} $}
\STATE  $\bar{\tau}(a,a') = +1$  
\ELSE 
\STATE  $\bar{\tau}(a,a') = -1$  
\ENDIF  
\ENDFOR  
\STATE Remove the negative edges from $A$, and let $C$ be the largest clique
\FOR{each $u \in V \backslash C$} 
\STATE Perform all queries $(u,c)$ for $c \in C$ 
\IF{the majority of answers is $+$} 
\STATE $C \leftarrow C \cup \{u\}$
\ENDIF
\ENDFOR
\RETURN $(C,V \backslash C)$
\end{algorithmic}
\end{algorithm*}

A straight-forward corollary of Lemma~\ref{lem1main} derived by taking a union bound over all pairs of nodes in $V \backslash S$ is that our algorithm predicts the labels of all such interactions correctly \whp. Using Lemma~\ref{lem1main} we are also able to prove the correctness of our Algorithm. 

\begin{proof}[Proof of Theorem~\ref{mainthrm1}]
Using lemma~\ref{lem1main} by setting $S = B$ we obtain that all pairwise interactions within the set $A$ are correctly labeled with high probability. By the pigeonhole principle, since $|A| = \frac{48 \log n}{\delta^2}$, one of the two clusters has at least  $\frac{24 \log n}{\delta^2}$ nodes in $A$. This set can easily be found: since within $A$  all labels $\bar{\tau}(a,a')$  are equal to $\tau(a,a')$, for $a,a' \in A$,   disregarding the negative labels $\bar{\tau}(a,a')$ will result in at most two connected cliques. We can find the largest such clique in $O(|A|)$ time (since one step of BFS finds all other nodes). Let $C$ be the corresponding set of nodes. 

Let $u \in V \backslash C$. We perform all possible $|C|$  queries  between $u$ and $C$, and we decide that $u$ belongs to $C$ if the majority of the oracle answers is $+1$. Define $X_c(u)$ to be an indicator random variable that is equal to 1 if the oracle answer for the pair $\{u,c\}$ is correct, and 0 otherwise. Let $X(u) = \sum_{c \in C} X_c(u)$ be the random variable distributed according to $Bin(|C|,1-q)$. The probability of failure is bounded by 
 
\begin{align*}
\Prob{ X(u) < \frac{|C|}{2} } & = \Prob{ X(u) < \bigg( 1- (1-\frac{1}{2(1-q)}) \bigg) (1-q)|C| } \\
&\leq \exp\Big( -\frac{\delta^2}{2(1+\delta)^2} \frac{24 \log n}{\delta^2} \frac{1+\delta}{2} \Big) < \frac{1}{n^3}.\\ 
\end{align*}

\noindent By combining the above results, and a union bound our proposed algorithm succeeds \whp to recover both clusters.
\end{proof}
 
The total runtime of our method is $O\bigg( 
\underbrace{ {\frac{48 \log n}{\delta^2} \choose 2} \frac{24 \log n}{\delta^4}}_{\text{classify all pairs in~}A} + \underbrace{\frac{48 \log n}{\delta^2}}_{\text{find largest clique}}+ \underbrace{\frac{ n\log n}{\delta^2}}_{\text{decide the rest of cluster nodes}}  \bigg)$ that simplifies to $O(\frac{ n \log n}{\delta^2}+\frac{\log^3 n}{\delta^8})$.

\begin{wrapfigure}{r}{0.45\textwidth}
\centering
\includegraphics[width=0.4\textwidth]{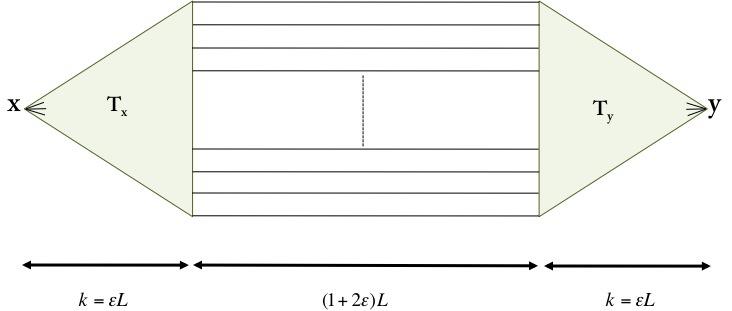} \\
\caption{\label{fig:fig1}  We create for each pair of nodes $x,y$ two node disjoint trees $T_x,T_y$ of depth $k= \epsilon L$  whose leaves can be matched via a natural isomorphism and linked with edge disjoint paths of length $(1+o(1))L$ (see Section~\ref{sec:proofs} for the details).}
\end{wrapfigure}

\spara{A path-based approach, Theorem~\ref{thm:thrm1}.}  Before we go into mathematical details (cf. Section~\ref{sec:proofs}), we describe how our algorithm behind Theorem~\ref{thm:thrm1} works. We perform  $O(n\Delta)$ queries uniformly at random
to predict all possible  ${n \choose 2}$ edge signs under our model,
as called in Theorem~\ref{thm:thrm1}.  Let $G$ be the resulting graph. 
To predict the sign of the node pair $\{x,y\}$, our algorithm   performs --at high level--  two steps. First, we construct a subgraph $G_{x,y}(V_{x,y},E_{x,y}) \subseteq G$. This subgraph is constructed using breadth first search (BFS), and consists of  two isomorphic trees $T_x,T_y$, each one rooted at $x,y$ respectively.  The leaves of these trees can be matched and linked with edge disjoint paths;  more details are given in Section~\ref{sec:proofs}.  Pairs of nodes that map to each other under the isomorphism are written as $v,\bar{v}$, so $y$ is also $\bar{x}$.  The isomorphic copies of the leaves $u \in T_x, \bar{u} \in T_y$ of the two trees are connected by edge disjoint paths. This subgraph is shown in Figure~\ref{fig:fig1}.

Given the subgraph $G_{x,y}$, our algorithm estimates the relative coloring of pairs of nodes recursively, working from the leafs of the trees $T_x, T_y$ up to the roots. That is, we first estimate $\sigma(u)\sigma(\bar u)$ for the leaves $u, \bar{u}$ based on the path between them, and then, moving toward the roots $x$ and $y$, we estimate $\sigma(v)\sigma(\bar v)$ based on a majority vote derived by the children.  More formally, let $Z_{v, \bar v}$ be the estimate of $\sigma(v)\sigma(\bar v)$ for any vertex $v$ given by the algorithm below. (Formally, this algorithm defines the random variables $Z_{u, \bar u}$).

\begin{itemize}
\item \underline{Base case}: For leaf nodes $u \in T_x$, we define
$$Z_{u, \bar u} := \tau(P_i)$$
where $\tau(P_i)$ is our estimate of $\sigma(u)\sigma(\bar u)$ based just on observations from the path $P_i$ from $u \to \bar u$ (that is,
$\tau(P_i) := \prod_{e \in P_i}\tau(e)$).
\item  \underline{Induction on depth}: For nodes $u$ at depth $\ell$ in $T_x$,
let $N(u)$ be children of $u$ (at depth $\ell+1$).
Then, define
$$Z_{u, \bar u} :=
\mathrm{majority}( \{ \tau(u, v) Z_{v, \bar v} \tau(\bar v, \bar u) \}_{v \in N(u)} ).$$
\end{itemize}

Our induction approach collapses each path between each pair of nodes $v,\bar{v}$ (that are children of $u, \bar{u}$ respectively) at depth $\ell+1$ into a single edge, which we estimate based on our previous estimates $Z_{v, \bar v}$. Then, in this ``collapsed'' graph, we take the $\mathrm{majority}$ vote over all (disjoint) paths $u \to \bar u$.
At the end, we output $Z_{x, y} :=Z_{x, \bar x} $.
Using Fourier analytic techniques \cite{o2014analysis} we  prove in Section~\ref{sec:proofs} that $$\Pr[Z_{x, y} = \sigma(x)\sigma(y)] \geq 1-1/n^3.$$  A union bound over all ${n \choose 2}$ pairs yields Theorem~\ref{thm:thrm1}.  Observe that algorithmically we do not need to perform all ${n \choose 2}$ queries to recover the two clusters, but any set of $n-1$ queries that form a spanning tree between the $n$ nodes.

\spara{A machine learning formulation.}  Our algorithm    is heavily based on paths to predict the sign of $\{x,y\}$.  Inspired by this result, we use paths  as an informative feature in the context of predicting positive and negative links in online social networks. Specifically, we enrich the machine learning formulation proposed by Leskovec et al. \cite{leskovec2010predicting} by adding 
four new global features as follows: for each edge $(u,v)$, we find a number of
edge-disjoint paths of length three that connect $u,v$, and similarly we find edge-disjoint paths of length four. 
 We calculate the product of the weights of each path and tally the number of positive and negative products for each path length. 
We add these four counts as four new dimensions. (We also tried paths of length five, but they are not as informative and are also more computationally expensive, so we do not study such paths henceforth.)  We ignore directions of edges both for computational efficiency, and in order to avoid introducing too many features, as for a path of length $\ell$ there are $2^{\ell}$ possible directed versions of the path. We describe some key elements of the   framework in \cite{leskovec2010predicting} for completeness.  Notice that the feature engineering is performed on the whole graph.

\begin{itemize}
\item[-] {\it Features:} In addition to our four new global features, we use  23 local features  to predict the  sign of the edge $u \rightarrow v$: $d^+_{out}(u),  d^-_{out}(u)$, $d^+_{in}(v), d^-_{in}(v), d_{out}(u), d_{in}(v)$, $C(u,v)$ where $C(u,v)$ is the embeddedness, i.e., the number common neighbors of $u,v$ (in an undirected sense), and a 16-dimensional count vector, with one coordinate for each possible configuration of a triad. 
\item[-] We train a logistic regression classifier that learns a model of the form  $ \Prob{ + | x } = \frac{1}{1+e^{-b_0+\sum b_i x_i}}$.    Here $x=(x_1,\ldots,x_{27})$ is our 27-dimensional feature vector. 
\item[-]  We create balanced datasets so that random guessing results in 50\% accuracy. We perform 10-fold cross validation, i.e., we
create 10 disjoint folds, each consisting of 10\% of the total number of edges. For each fold, we use the remaining 90\% of the edges as the training dataset for the logistic regression. We report average accuracies over these 10 folds. 
\end{itemize}

\section{Experimental Results}
\label{sec:exp}
\subsection{Experimental Setup}

\spara{Experimental setting.}  Since finding the maximum number of edge-disjoint paths of short length is NP-hard \cite{itai1982complexity}, we implement a fast greedy  heuristic: to find edge-disjoint paths of length $k$ ($k=3,4$ in our experiments)  between $s,t$, we discard edge directionality, and we start BFS from $s$. As soon as we find a path of length $k$ to $t$, we check if its edges have been removed from the graph using a hash table; if not, we add the path to our collection, we remove its edges from the graph, we add them to the hash table, and we continue. At termination, we count how many positive and negative paths exist in our collection. To train a classifier, we use logistic regression. For this purpose we use  {\em Scikit-learn} \cite{pedregosa2011scikit}.

 \spara{Datasets.} Table~\ref{tab:datasets} shows various publicly available online social networks (OSN)  we use in our experiments together with  the number of nodes $n$ and the number of edges $m$.  We present in detail our findings for the first two datasets described in the following. The results for the other graphs are very similar.  
  
{\em Slashdot} is a news website. Nodes correspond to users, and edges to their interactions. A positive sign means that a user likes another user's comments.  
 
{\em Wikipedia} is  a free online encyclopedia, created and edited by volunteers around the world. 
Nodes correspond to editors, and  a signed link indicates a positive or negative vote by one user on the promotion of another. 

\spara{Machine specs.} All experiments run on a laptop with 1.7 GHz Intel Core i7 processor and 8GB of main memory.

\spara{Code.} Our code was written in Python. A demo of our code is available as a Python notebook online  at \href{https://github.com/tsourolampis/Predicting-Signed-Edges/blob/master/Prediction.ipynb}{github/Prediction.ipynb}. 

\begin{table}[!ht]
\begin{center}
\begin{tabular}{|l|c|c|c|} \hline
Name  & $n$  & $m$  & Description \\ \hline 
{\sc  Slashdot (Feb. 21) }   & 82\,144  & 549\,202 & OSN \cite{snap}\\
{\sc Wikipedia}   &  7\,118 	& 103\,747 & OSN \cite{snap}	 \\ 
{\sc Epinions} & 119\,217 & 841\,200 & OSN \cite{snap} \\ 
{\sc  Slashdot (Nov. 6) }   & 77\,350  & 516\,575 & OSN \cite{snap}\\
{\sc  Slashdot (Feb. 16) }   & 81\,867  & 545\,671 & OSN \cite{snap}\\ 
{\sc Highlands tribes} & 16 & 58 & SN \cite{konect:kenneth54} \\ 
\hline
\end{tabular}
\end{center}
\caption{\label{tab:datasets} Datasets used in our experiments. }
\end{table}

\begin{figure}[t]
\centering
\begin{tabular}{cc} 
\includegraphics[width=0.60\textwidth]{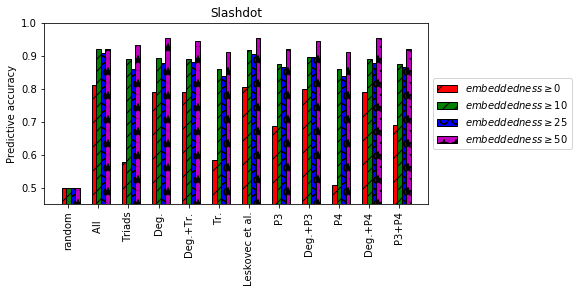} \\
(a) \\  
\includegraphics[width=0.60\textwidth]{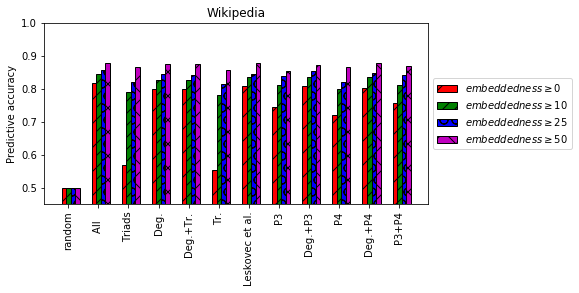}   \\ 
(b)  
\end{tabular}
\caption{\label{fig:overall_performance} Average accuracy of predicting edge signs using 10-fold cross validation. (a) Slashdot, (b) Wikipedia.  }
\end{figure}

\begin{figure*}[!ht]
\centering
\begin{tabular}{@{}c@{}@{\ }c@{}@{}c@{}} 
\includegraphics[width=0.32\textwidth]{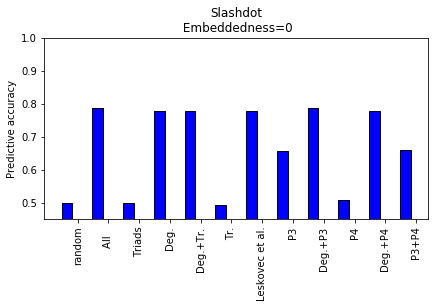}  &
\includegraphics[width=0.32\textwidth]{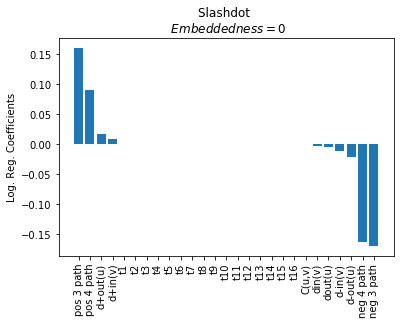} & 
\includegraphics[width=0.32\textwidth]{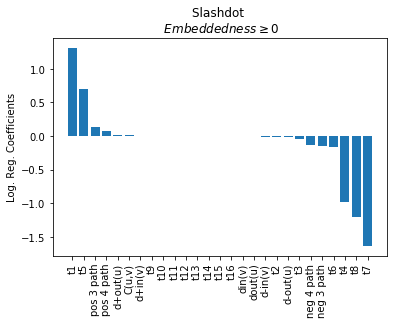} 
\\
(a) & (b)  & (c) \\
\includegraphics[width=0.32\textwidth]{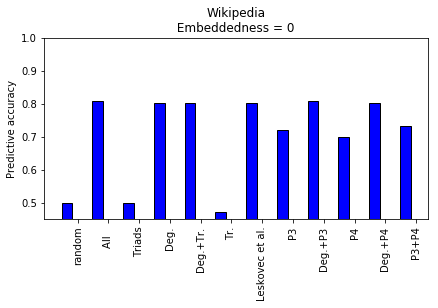} 
 &
\includegraphics[width=0.32\textwidth]{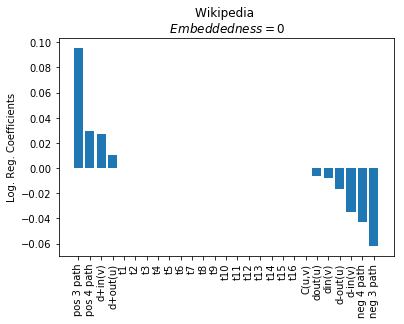} 
& 
\includegraphics[width=0.32\textwidth]{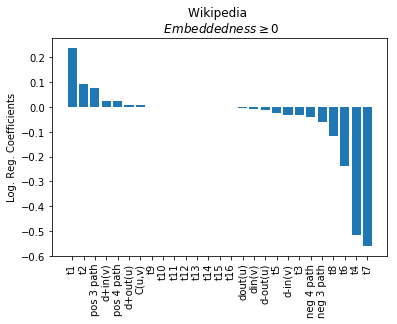} 
\\
(d) & (e) & (f)  \\
\end{tabular}
\caption{\label{fig:coefficients} (a),(d) Average accuracy of predicting  signs of edges with zero embeddedness using 10-fold cross validation, (b),(e) and the resulting logistic regression coefficients, for Slashdot and Wikipedia respectively.  (c), (f) Learned logistic regression coefficients for the whole Slashdot and Wikipedia  datasets respectively.}
\end{figure*}

\subsection{Empirical findings} 
\label{sec:osn}
 
We experiment with various combinations of the 27 features that we described in Section ~\ref{sec:proposed}. {\em All} refers to using all 27 features, {\em Triads} to the 16-dimensional vector of triad counts, {\em Deg} to degree features, {\em Tr.} (short for triangles) to the number of common neighbors, {\em Leskovec et al.} to the  23 features  used in \cite{leskovec2010predicting}, and {\em P3}, {\em P4} to the number of negative and positive edge-disjoint paths of length 3, 4 respectively. A combination of the form  {\em P3+P4} means using the union of these features, for example counts of positive and negative edge disjoint paths of length 3 and 4 respectively. 

Figures~\ref{fig:overall_performance}(a), (b) shows the  performance of our classifier using different combinations of features, broken down by a lower bound on the embeddedness. For the Slashdot dataset,  we observe that when we classify all edges (embeddedness $\geq 0$)  {\em P3} performs better than {\em Triads}, i.e.,  68.8\% vs 57.8\%. Also, the performance of  a {\em Triads}-based classifier is not monotonic as a function of the embeddedness lower bound. For example, when embeddedness is at least 10 the accuracy is 88.9\%, whereas when it is at least 25 it becomes 86.1\%. However, in general the prediction problem 
becomes easier as the embeddedness increases. Also, using all features, i.e., the addition of the four new features {\em P3}, {\em P4} to the existing {\em Leskovec et al.} results in the best possible performance. 
Finally, paths of length 3 are more informative than paths of length 4. This is clearly seen by the logistic regression coefficients shown in  Figure~\ref{fig:coefficients}(c). We also observe that different types of triads can have significantly different regression coefficients, and that the coefficients depend significantly on the graph, as seen in Figures~\ref{fig:coefficients}(c) and~\ref{fig:coefficients}(f).

Figures~\ref{fig:coefficients}(a), (d) show  the average accuracy of predicting edge signs for edges with embeddedness equal to zero for the Slashdot and Wikipedia datasets respectively. When we use {\em Triads} the predictive accuracy is as only about as good as random guessing, i.e., 50\%.   {\em P3} results in 65.74\%, and 71.96\% accuracy, {\em P4} in 50.78\%, and 69.90\% accuracy for Slashdot and Wikipedia respectively.  We observe that using all features leads to the best possible performances of 78.63\%, and 80.92\% accuracy respectively for the two datasets.  The importance of paths of length 3, and 4 for  edges with zero embeddedness is seen by the logistic regression coefficients in  Figures~\ref{fig:coefficients}(b), (e).

\section{Algorithmic Analysis}
\label{sec:proofs}
We use the following notation. Let $\epsilon := \frac{1}{\sqrt{\lg\lg n}}$, 
and 
$$\Delta = O\big(\mathrm{max}\{\frac{1}{\delta^4}\log n,  (\frac{1}{\delta} )^{4 + \frac{2+2\epsilon}{\epsilon} }\}\big)$$ 

\noindent be the average degree. We perform in total 
$\frac{12n\log{n}}{\delta^4}$ queries, and for simplicity,  let the bias $0 < \delta < \frac{1}{2} $ be a constant, independent of $n$. Hence, asymptotically $\Delta =   \frac{12}{\delta^4}\log n $.  Finally,  let $L = \frac{\log n}{\log \Delta}$ be the diameter of the resulting random graph we obtain \whp  \cite{bollobas1998random}.

\begin{algorithm}[t]
\caption{\label{alg:edgeDisjointPaths}  Almost-Edge-Disjoint-Paths($u,v$)} 
 \begin{algorithmic}  
\REQUIRE $G(V,E)$, $u,v \in V(G)$ 
\STATE $\epsilon \leftarrow \frac{1}{\sqrt{\log\log{n}}}$
\STATE Using Breadth First Search (BFS) grow a tree $T_u$ starting from $u$ as follows.  
\STATE We use a branching factor equal to $\frac{4\log{n}}{\delta^4}$ until it reaches depth equal to $\epsilon L$. Similarly, grow a tree $T_v$ rooted at $v$, node disjoint from $T_u$ of equal depth.
\STATE From each leaf $u_i$ ($v_i$) of  $T_u$ ($T_v$)  for $i=1,\ldots, N$ 	grow  node disjoint trees until they reach depth $(\frac{1}{2}+\epsilon)L$ with branching factor $\frac{4\log{n}}{\delta^4}$.  Finally, find an edge between $T_{u_i},T_{v_i}$ 
\end{algorithmic}
\end{algorithm}

\subsection{Subgraph construction} 
\label{sec:subgraph}  

The next lemma follows from standard Chernoff bounds (and a union bound over vertices).
\begin{lemma} 
\label{lem1} 
Let $G \sim G(n,\frac{12\log{n}}{\delta^4 n})$ be a random binomial graph. Then \whp all vertices have degree greater than $\frac{5\log{n}}{\delta^{4}}$. 
\end{lemma}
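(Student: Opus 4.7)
The plan is to apply a standard multiplicative Chernoff bound to each vertex's degree individually, and then take a union bound over all $n$ vertices.

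For a fixed vertex $v \in V$, the degree $d(v)$ is the sum of $n-1$ independent Bernoulli random variables, each equal to $1$ with probability $p := \frac{12 \log n}{\delta^4 n}$. Hence $d(v) \sim \bin{n-1}{p}$ with expectation $\mu := (n-1)p = (1-o(1))\frac{12\log n}{\delta^4}$. I want the lower-tail event $d(v) \le \frac{5\log n}{\delta^4}$, which corresponds to deviating by roughly a multiplicative $(1-\alpha)$ factor with $\alpha$ close to $7/12$ (for all sufficiently large $n$, any $\alpha < 7/12$ works after absorbing the $(1-o(1))$). Using the standard form $\Pr[d(v) \le (1-\alpha)\mu] \le \exp(-\alpha^2 \mu / 2)$, the right-hand side becomes at most $\exp(-c \log n / \delta^4)$ for an absolute constant $c > 0$; choosing the constants so that $c/\delta^4 \ge 4$ (which is comfortable since $\delta < 1/2$ is fixed) gives a per-vertex failure probability of at most $n^{-4}$.

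A union bound over the $n$ vertices then yields that all vertices simultaneously satisfy $d(v) > \frac{5\log n}{\delta^4}$ with probability at least $1 - n^{-3}$, which is \whp{} as $n\to\infty$, proving the lemma.

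The only subtlety, rather than an obstacle, is bookkeeping: one must make sure the multiplicative slack between the mean $\tfrac{12\log n}{\delta^4}$ and the target threshold $\tfrac{5\log n}{\delta^4}$ is large enough that the Chernoff exponent beats the $\log n$ coming from the union bound for every admissible $\delta$. Since $\delta$ is treated as a constant bounded away from $1$ in this section, this is immediate; if instead one wanted a version uniform in $\delta$, one would need to verify that $\delta^4$ does not degrade the exponent, which is precisely why the density $\frac{12\log n}{\delta^4 n}$ (rather than $\frac{c\log n}{n}$) was chosen.
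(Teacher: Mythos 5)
Your argument is exactly what the paper intends: the paper gives no detailed proof and simply states that the lemma ``follows from standard Chernoff bounds (and a union bound over vertices),'' which is precisely the multiplicative lower-tail Chernoff estimate plus union bound that you carry out. The bookkeeping also checks out: with $\mu \approx \tfrac{12\log n}{\delta^4}$ and threshold $\tfrac{5\log n}{\delta^4}$ the exponent is $\Omega\bigl(\tfrac{\log n}{\delta^4}\bigr)$, which comfortably beats the factor of $n$ from the union bound.
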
 


\noindent Now we proceed to our construction of sufficiently enough almost edge-disjoint paths. Our construction is based on  standard techniques in random graph theory \cite{broder1998optimal,dudek2015rainbow,frieze2012rainbow,tsourakakis2013mathematical}, we include the full proofs for completeness. 

\begin{lemma}
\label{lem2} 
Let $G\sim G(n,p)$ where $p=\frac{12\log n}{\delta^4n}:= \frac{c_{\delta} \log n}{n}$.  Fix $t\in \field{Z}^+$ and $0<\alpha<1$. Then, \whp there does not exist a subset $S \subseteq [n]$, such that $|S| \leq \alpha t L$ and $e[S] \geq |S|+t$. 
\end{lemma}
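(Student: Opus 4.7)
The plan is to carry out a first-moment calculation. For a fixed subset $S$ with $|S| = s$, each of the $\binom{s}{2}$ possible edges inside $S$ is present independently with probability $p$, so the probability that $S$ contains at least $s+t$ edges is at most $\binom{\binom{s}{2}}{s+t} p^{s+t}$. A union bound over the $\binom{n}{s}$ subsets of size $s$, and then over the range of admissible $s$, gives
\[
\Prob{\exists S} \;\leq\; \sum_{s} \binom{n}{s}\binom{\binom{s}{2}}{s+t} p^{s+t}.
\]
The range of $s$ runs only up to $\alpha t L$, and is implicitly bounded below by the requirement $\binom{s}{2}\geq s+t$ (otherwise the inner binomial is zero and the term contributes nothing).

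The next step is to plug in the standard estimates $\binom{n}{s}\leq (en/s)^s$ and $\binom{\binom{s}{2}}{s+t}\leq \bigl(e s^2/(2(s+t))\bigr)^{s+t} \leq (es/2)^{s+t}$, where the last inequality uses $s^2/(s+t)\le s$. Regrouping the resulting powers of $s$ and $p$, and substituting $np=\Delta$, each term collapses to the clean product form
\[
\binom{n}{s}\binom{\binom{s}{2}}{s+t} p^{s+t} \;\leq\; \brac{\tfrac{e^2\Delta}{2}}^{s}\brac{\tfrac{e s \Delta}{2n}}^{t}.
\]
This is the core inequality; separating $s$-dependence from $t$-dependence is exactly what lets the $n^{-t}$ factor win.

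Now I would exploit the defining relation $\Delta^{L}=n$. Since $s\leq \alpha t L$, the first factor satisfies $(e^2\Delta/2)^s \leq (e^2/2)^{\alpha t L}\cdot \Delta^{\alpha t L} = (e^2/2)^{\alpha t L}\cdot n^{\alpha t}$, and $(e^2/2)^{\alpha t L} = \exp\bigl(O(tL)\bigr)$ is only sub-polynomial in $n$ because $L=\log n/\log\Delta = \Theta(\log n/\log\log n)$. For the second factor, bounding $s\leq \alpha t L$ gives
\[
\brac{\tfrac{e s \Delta}{2n}}^{t} \;\leq\; \brac{\tfrac{e\,\alpha\, t\, L\, \Delta}{2n}}^{t} \;=\; \brac{\tfrac{O(\log^2 n/\log\log n)}{n}}^{t} \;=\; n^{-t + o(1)}.
\]
Multiplying the two factors yields, for every admissible $s$, an upper bound of $n^{-(1-\alpha) t + o(1)}$. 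Finally summing over at most $\alpha t L = O(\log n)$ values of $s$ absorbs into the $o(1)$ in the exponent, so the total probability is $n^{-(1-\alpha) t + o(1)} = o(1)$, proving the lemma \whp.

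The main obstacle I anticipate is bookkeeping: one has to verify that the sub-polynomial slack $(e^2/2)^{\alpha t L}$ coming from the $s$-sum does not swamp the polynomial $n^{-(1-\alpha) t}$ gain. This is where the choice $L = \log n/\log\Delta$ (rather than something larger) is essential, because it converts $\Delta^s$ into a genuine polynomial $n^{\alpha t}$ while leaving the constant-base exponential $(e^2/2)^{s}$ at a negligible $n^{o(1)}$ level. Once one sees the identity $\Delta^{L}=n$ as the right bookkeeping device, the rest of the estimate is routine.
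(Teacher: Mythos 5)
Your proof is correct and follows essentially the same route as the paper: a first-moment/union-bound calculation using $\binom{n}{s}\le (en/s)^s$ and $\binom{\binom{s}{2}}{s+t}\le (es/2)^{s+t}$, then exploiting $s\le\alpha t L$ and the identity $\Delta^L=n$ to extract $n^{-(1-\alpha)t+o(1)}$. Your regrouping into the product $\brac{e^2\Delta/2}^s\brac{es\Delta/(2n)}^t$ cleanly separates the $s$- and $t$-dependence and makes the bookkeeping slightly more transparent than the paper's, but the argument is the same.
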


\begin{proof}
Set $s=|S|$.Then,
\begin{align*}
\Prob{ \exists S: s \leq \alpha t L \text{~~and~~} e[S] \geq s+t } 
&  
\leq \sum_{s \leq \alpha t L} \binom{n}{s} \binom{\binom{s}{2}}{s+t} p^{s+t} \leq \\ 
  \sum_{s \leq \alpha t L } \bfrac{ne}{s}^s \bfrac{es^2p}{2(s+t)}^{s+t}   
&\leq \sum_{s \leq \alpha t L } (e^{2+o(1)}\log{n})^s \bfrac{\tfrac{c_{\delta}}{2}es\log{n}}{n}^t \leq \\
  \alpha tL \brac{ (e^{2+o(1)}\log{n})^{ \alpha L} \bfrac{\tfrac{c_{\delta}}{2} e \alpha t\log^2{n} }{n\log{\log{n}}}}^t &<   \frac{1}{n^{(1-\a-o(1))t}}. 
\end{align*}
\end{proof}

\begin{lemma}
\label{lem3}
Let $T$ be a rooted (subgraph) tree of depth at most $\frac{4L}{7}$ and let $v$ be a vertex not in $T$.  Then with probability $1-o(n^{-3})$, $v$ has at most $10$ neighbors in $T$, i.e., $|N(v) \cap T| \leq 10$.
\end{lemma}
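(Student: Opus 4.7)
The plan is to argue by contradiction using Lemma~\ref{lem2}: if $v$ has many neighbors in $T$, those neighbors together with the $T$-paths connecting them to the root form a small, dense subgraph of $G$, which Lemma~\ref{lem2} forbids \whp.

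Concretely, I would suppose that $v$ has at least $k$ neighbors $w_1,\ldots,w_k$ in $T$, for a constant $k$ to be fixed at the end of the argument. Let $T'$ be the subtree of $T$ spanned by the root and the $w_i$'s, i.e., the union of the root-to-$w_i$ paths in $T$. Since the $k$ paths share the root and each has length at most $\tfrac{4L}{7}$,
$$
|T'| \;\le\; 1 + \sum_{i=1}^k \mathrm{depth}_T(w_i) \;\le\; 1 + \tfrac{4kL}{7}.
$$
Setting $S := V(T') \cup \{v\}$ gives $|S| \le \tfrac{4kL}{7}+2$, while $G[S]$ contains the $|T'|-1$ tree edges of $T'$ plus the $k$ edges $\{v, w_i\}$, so $e_G[S] \ge |S| + (k-2)$. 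Applying Lemma~\ref{lem2} with $t := k-2$ and $\alpha := \tfrac{4k}{7(k-2)}+o(1) \in (0,1)$ (which satisfies $|S| \le \alpha t L$) bounds the probability of such a dense $S$ existing in $G$ by
$$
n^{-(1-\alpha-o(1))(k-2)} \;=\; n^{-(3k-14)/7 + o(1)},
$$
and choosing $k$ so the exponent exceeds $3$ (e.g., $k=12$ gives $22/7$) yields an $o(n^{-3})$ bound. A union bound over the at most $n$ candidate vertices $v$ is absorbed into the same bound.

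The main obstacle is calibrating the constants so that the bound reads exactly ``$\le 10$'' neighbors, since the trade-off between $|T'|\propto k$ and the edge surplus $k-2$ is borderline at that threshold---the pure density argument above saturates near $k\approx 12$. To close the small remaining gap I would supplement it with a direct first-moment estimate: any BFS tree of depth $\le\tfrac{4L}{7}$ with branching factor $\le\Delta$ has $|T|\le\Delta^{4L/7}=n^{4/7}$ vertices, so when the edges incident to $v$ are exposed only after $T$ is built (the principle of deferred decisions),
$$
\Prob{|N(v)\cap T|\ge k} \;\le\; \binom{n^{4/7}}{k}\,p^k \;=\; O\bigl(((\log n)/n^{3/7})^k\bigr),
$$
which is $o(n^{-3})$ already at $k=8$. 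Combining this first-moment estimate with the density argument (the former handling the ``many small-depth neighbors'' regime, the latter handling the case in which neighbors are spread deep in $T$) pins down the stated constant of $10$.
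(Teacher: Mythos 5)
Your main argument---set $S$ to be $v$ together with its neighbors in $T$ and their root-paths, then observe $|S|=O(kL)$ with a surplus of $k-2$ edges, and invoke Lemma~\ref{lem2}---is exactly the paper's approach. You are also right that, carried out honestly, this density argument saturates near $k\approx 12$: minimizing $\frac{3k-14}{7}>3$ forces $k\ge 12$, so it certifies ``at most $11$'' neighbors whp, not ``at most $10$.'' (In fact the paper's own instantiation $\alpha=3/5,\ t=8$ has the same calibration slip; the discrepancy is harmless because Lemma~\ref{lem4} only needs the number of bad edges to be $O(1)\ll \frac{\log n}{\delta^4}$.)

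The problem is the supplementary first-moment argument you propose to close the gap; it does not go through. The estimate $\Prob{|N(v)\cap T|\ge k}\le\binom{|T|}{k}p^k$ treats $T$ as fixed and independent of the edges at $v$, but Lemma~\ref{lem3} must hold uniformly over \emph{every} rooted tree $T$ of depth $\le 4L/7$ that might arise, and the BFS that produces $T$ in Lemma~\ref{lem4} exposes edges incident to $v$ as it runs (indeed $v$ is typically discovered precisely as a neighbor of some $w\in T$). Deferred decisions is not available here, and a union bound over all candidate trees of size $n^{4/7}$ is astronomically larger than $n^{3}$. You also quietly assume a branching-factor bound ($|T|\le\Delta^{4L/7}$) that is not part of the hypothesis; Lemma~\ref{lem3} is stated for arbitrary rooted trees of the given depth. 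The whole point of routing through Lemma~\ref{lem2} is that it is a simultaneous statement over all subsets $S\subseteq[n]$, so it is immune to the adaptivity that breaks your first-moment bound. The clean resolution is simply to state the constant the density argument actually delivers (say $12$) rather than patching it with an argument that requires independence you do not have.
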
 

\begin{proof}
Let $T$ be a rooted tree of depth at most $\frac{4L}{7}$ and let $S$ consist of $v$, the neighbors of $v$ in $T$ plus the ancestors of these neighbors. Set $b =|N(v) \cap T|$. Then 
$|S|\leq 4bL/7+1\le 3bL/5$ and $e[S]=|S|+b-2$. It follows from Lemma \ref{lem2} with $\a=3/5$ and $t=8$, that 
we must have $b\leq 10$ with probability $1-o(n^{-3})$.
\end{proof}

\noindent We show that by growing trees iteratively we can construct 
sufficiently many edge-disjoint paths for $n$ sufficiently large.

\begin{lemma}
\label{lem4} 
Let $k=\epsilon L$. For all pairs of  vertices $x,y \in [n]$ there exists a subgraph $G_{x,y}(V_{x,y},E_{x,y})$ of $G$ as shown in Figure~\ref{fig:fig1}, {\it whp}.
The subgraph consists of two isomorphic vertex disjoint trees $T_x,T_y$ rooted at $x,y$ each of depth $k$.
$T_x$ and $T_y$ both have a branching factor of  $\frac{4\log n}{\delta^4} $. 
If the leaves of $T_x$ are $x_1,x_2,\ldots,x_\tau,\tau\geq n^{4\epsilon/5}$ then $y_i=f(x_i)$ where $f$ is a natural 
isomorphism.
Between each pair of leaves $(x_i,y_i),i=1,2,\ldots,m$ 
there is a path $P_i$ of length  $(1+2\epsilon) L$. The paths $P_i,i=1,2,\ldots,\tau,\ldots$ are edge disjoint.
\end{lemma}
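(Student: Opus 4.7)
}
The plan is to execute Algorithm~\ref{alg:edgeDisjointPaths} and verify, using Lemmas~\ref{lem1}--\ref{lem3}, that every BFS step retains branching factor at least $\frac{4\log n}{\delta^4}$. I proceed in three phases, with all failure events ultimately union-bounded over the $\binom{n}{2}$ pairs $(x,y)$.

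\emph{Phase 1 (growing $T_x$ and $T_y$).} Starting from $x$, I grow a BFS tree of depth $k=\epsilon L$. When exposing the neighborhood of a newly reached vertex $w$ at some level $\ell < k$, Lemma~\ref{lem1} gives $\deg(w) \geq \frac{5\log n}{\delta^4}$, and Lemma~\ref{lem3} (applied with $T$ the tree built so far, of depth $\leq \epsilon L \ll \tfrac{4L}{7}$) says $w$ has at most $10$ neighbors inside $T$. Hence at least $\frac{5\log n}{\delta^4}-10 \geq \frac{4\log n}{\delta^4}$ fresh children are available. I grow $T_y$ from $y$ in the same way, forbidding the (still tiny) vertex set of $T_x$; Lemma~\ref{lem3} applied to $T_x \cup (\text{part of } T_y)$ still bounds back-neighbors by a constant. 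After depth $\epsilon L$, the number of leaves is $\tau \geq \bigl(\tfrac{4\log n}{\delta^4}\bigr)^{\epsilon L}$. Using $\Delta = \Theta(\log n/\delta^4)$ and $L = \log n/\log\Delta$, this evaluates to $\tau \geq \Delta^{\epsilon L(1-o(1))} = n^{\epsilon(1-o(1))} \geq n^{4\epsilon/5}$ for $n$ sufficiently large.

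\emph{Phase 2 (extension trees at each leaf).} For each pair of matched leaves $(x_i, y_i = f(x_i))$ I grow, sequentially in $i$, BFS subtrees $T_{x_i}$ and $T_{y_i}$ of depth $(\tfrac12+\epsilon)L$ with branching factor $\frac{4\log n}{\delta^4}$, forbidding the union of $T_x$, $T_y$ and all earlier subtrees. Each subtree has size at most $\Delta^{(\tfrac12+\epsilon)L}=n^{1/2+\epsilon+o(1)}$, so after processing all $\tau$ pairs the total built structure has size $O(\tau\cdot n^{1/2+\epsilon+o(1)}) = O(n^{1/2+1.8\epsilon+o(1)}) = o(n)$. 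The overall tree depth from $x$ down through any $T_{x_i}$ is $\epsilon L + (\tfrac12+\epsilon)L = (\tfrac12+2\epsilon)L < \tfrac{4L}{7}$ for small $\epsilon$, so Lemma~\ref{lem3} still bounds the number of back-edges from a freshly exposed vertex by $10$, and the branching factor survives.

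\emph{Phase 3 (closing each path).} For each $i$ I need one edge of $G$ between $T_{x_i}$ and $T_{y_i}$ that has not yet been exposed. Both trees have $n^{1/2+\epsilon+o(1)}$ vertices, giving $\Theta(n^{1+2\epsilon})$ unexposed potential edges; since $p = \Theta(\log n/n)$, the chance that none is present is $(1-p)^{n^{1+2\epsilon}} \leq \exp(-\Omega(n^{2\epsilon}\log n))$, much smaller than $n^{-3}/\binom{n}{2}$. Concatenating a root-to-leaf path in $T_{x_i}$, the connecting edge, and a leaf-to-root path in $T_{y_i}$ yields an $x_i$-$y_i$ path $P_i$ of length $2(\tfrac12+\epsilon)L + 1 = (1+2\epsilon)L+1$. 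Edge-disjointness of the $P_i$'s is automatic from vertex-disjointness of the subtrees $T_{x_i}, T_{y_i}$ constructed in Phase 2. A union bound over $i \leq \tau$ and over all $\binom{n}{2}$ pairs $(x,y)$ completes the argument.

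\emph{Main obstacle.} The delicate point is keeping the cumulative depth of everything built so far strictly below $\tfrac{4L}{7}$, and the cumulative vertex count $o(n)$, throughout Phase 2. Both would fail if $\epsilon$ were a constant, which is precisely why $\epsilon = 1/\sqrt{\log\log n} \to 0$ is chosen; one needs to check that $(\tfrac12+2\epsilon)L < \tfrac{4L}{7}$ and $\tfrac12 + 1.8\epsilon < 1$ hold for all sufficiently large $n$, so that Lemma~\ref{lem3} remains applicable at every step of the construction.
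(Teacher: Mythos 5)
Your proposal traces the paper's own proof almost exactly: the same three-phase BFS construction ($T_x$ and $T_y$ to depth $\epsilon L$, then extension trees from each matched pair of leaves to depth $(\tfrac12+\epsilon)L$, then one closing edge per pair), the same invocation of Lemmas~\ref{lem1} and~\ref{lem3} to maintain the branching factor $\tfrac{5\log n}{\delta^4}-O(1)\geq\tfrac{4\log n}{\delta^4}$, and the same counting of potential edges between leaf sets to close each path. The one place you are slightly less careful than the paper is in Phase 3: having conditioned on every BFS step succeeding, you cannot simply assert that the unexposed $X_i$--$Y_i$ edges are still fresh Bernoulli($p$) coins; the paper makes this rigorous by introducing the event $\mathbf{A}$ (no BFS step fails), noting $\Pr[\mathbf{A}]=1-o(1)$, and bounding $\Pr[\exists i: e(X_i,Y_i)=0\mid\mathbf{A}]\leq\Pr[\exists i: e(X_i,Y_i)=0]/\Pr[\mathbf{A}]$, which only costs a $1+o(1)$ factor. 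Apart from that technicality (and small exponent discrepancies such as $n^{1+2\epsilon}$ versus the paper's $n^{1+8\epsilon/5}$, which do not affect the conclusion), your argument is the paper's argument.
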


\begin{proof}
Since we have to do this for all pairs $x,y$, we note without further comment that likely (resp. unlikely) 
events will
be shown to occur with probability $1-o(n^{-2})$ (resp. $o(n^{-2}$)).

To find the subgraph shown in Figure~\ref{fig:fig1} we grow tree structures as shown 
in Figure~\ref{fig:fig2}.
Specifically, we first grow a tree from $x$ using BFS until it reaches  depth $k$. 
Then, we grow a tree starting from $y$ again using BFS until it reaches depth $k$.  Finally, once trees $T_x,T_y$ have been constructed, we grow trees from the leaves of $T_x$ and $T_y$ using BFS for depth $\gamma=(\frac{1}{2}+\epsilon)L$. 
We analyze these processes, explaining
in detail for $T_x$ and outlining the differences for the other trees. 
We use the notation $D_i^{(\r)}$ for the number of vertices at depth $i$
of the BFS tree rooted at $\r$.

First we grow $T_x$. As we grow the tree via BFS from a vertex $v$ at depth $i$ to vertices 
at depth $i+1$ certain {\em bad} edges from $v$ may point 
to vertices already in $T_x$. Lemma \ref{lem3} shows with probability $1-o(n^{-3})$ there can be at most 10 bad edges emanating from $v$.

Hence, we obtain the recursion  
\beq{rec1}
D_{i+1}^{(x)} \geq \brac{ \frac{5\log{n}}{\delta^4}-10} (D_i^{(x)}-1) \geq \frac{4 \log{n} }{\delta^4}D_i^{(x)}.
\eeq

\noindent Therefore the number of leaves satisfies 

\beq{rec2}
D_{k}^{(x)} \geq   \Big( \frac{4\log n}{\delta^4} \Big)^{\e L}\geq n^{4\e/5}.
\eeq

\noindent We can make the branching factors exactly $\frac{4\log n}{\delta^4}$ by pruning. We do this so that the trees $T_x, T_y$ are isomorphic to each other. With a similar argument $D_{k}^{(y)} \geq n^{\frac{4}{5}\epsilon}$. Specifically, the only difference is that now we also say an edge is bad if the other endpoint is in $T_x$.
This immediately gives
$$D_{i+1}^{(y)} \geq \brac{ \frac{5\log{n}}{\delta^4}-20} (D_i^{(y)}-1) \geq \frac{4\log{n}}{\delta^4} D_i^{(y)}$$
and the required conclusion.  

Similarly, from each leaf $x_i \in T_x$ and $y_i \in T_y$ we grow trees $\hT_{x_i},\hT_{y_i}$ of depth 
$\gamma = \big(\frac{1}{2}+\epsilon\big) L$ using the same procedure and arguments 
as above. Lemma~\ref{lem3} implies that there are at most 20 edges from the vertex $v$ being explored to 
vertices in any of the trees already constructed (at most 10 to $T_x$ plus any trees rooted at an $x_i$ 
and another 10 for $y$).
The number of leaves of each $\hT_{x_i}$ now satisfies
$$\hD_\g^{(x_i)}\geq (\frac{4}{\delta^4}\log{n})^{\g+1}\geq n^{\frac{1}{2}+\frac{4}{5}\epsilon}.$$ The result is similar for $\hD_\g^{(y_i)}$.

\noindent Observe next that BFS does not condition on the edges between the leaves $X_i,Y_i$ of the trees $\hT_{x_i}$ 
and $\hT_{y_i}$.
That is, we do not need to look at these edges in order to carry out our construction. 
On the other hand we have conditioned on the occurrence of certain events to imply a certain growth rate.
We handle this technicality as follows. We go through the above construction and halt if ever we find that we cannot
expand by the required amount. Let ${\bf A}$ be the event that we do not halt the construction
i.e. we fail the conditions of Lemmas \ref{lem2} or \ref{lem3}. We have
$\Prob{{\bf A}}=1-o(1)$ and so,

\begin{align*}
\Prob{\exists i:e(X_i,Y_i)=0\mid {\bf A}} &\leq \frac{\Prob{\exists i:e(X_i,Y_i)=0}}{\Pr({\bf A})}   \\ 
\leq 2n^{\frac{4\e}{5}}(1-p)^{n^{1+\frac{8\e}{5}}}  
& \leq n^{-n^\e}.
\end{align*}

\noindent We conclude that {\em whp} there is always an edge between each $X_i,Y_i$ and thus a path of length at most
$(1+2\e)L$ between each $x_i,y_i$.
\end{proof}

\begin{figure}[t]
\centering
\includegraphics[width=0.7\textwidth]{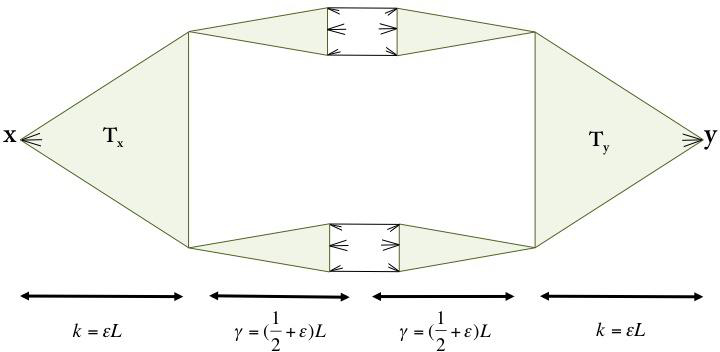} \\
\caption{\label{fig:fig2}  We create edge disjoint paths for each isomorphic pair of leaves $u,\bar{u}$ in the two node disjoint trees $T_x,T_y$ (see Lemma~\ref{lem4} for the details).}
\end{figure}

\noindent Using elementary data structures,  our algorithm runs  in total expected run time $O(n^2(n+m))=O(\frac{n^3\log{n}}{\delta^4})$.    

\subsection{Algorithm Correctness} 
\label{sec:correctness}

Recall from Section~\ref{sec:proposed} that 
$Z_{u, \bar u} :=
\maj( \{ \tau(u, v) Z_{v, \bar v} \tau(\bar v, \bar u) \}_{v \in N(u)} )$. 
Therefore, note that at any level $k$ in the tree,
the random variables $\{Z_{u, \bar u}\}$
are {\bf independent} for all nodes $u$ at level $k$.
(This is true in the base case by path-disjointedness, and preserved by the induction). The key  Lemma~\ref{lem:bias} follows.   In simple terms, it shows that the bias of our estimator improves by roughly a $\delta^2\sqrt{\Delta}$ factor at each level.
 
\noindent 
\begin{lemma}
\label{lem:bias} 
Suppose that for all $v \in T_x$ at depth $k+1$, we have
$$
\Pr[Z_{v, \bar v} = \sigma(v)\sigma(\bar v)] \geq 1/2 + \gamma$$
Then, for all $u \in T_x$ at depth $k$, we have
$$\Pr[Z_{u, \bar u} = \sigma(u)\sigma(\bar u)] \geq 1/2 +
\min(\gamma (c_1 \delta^2 \sqrt{\Delta}), c_2)$$
for some universal $c_1, c_2$.
\end{lemma}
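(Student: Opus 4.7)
The plan is to decompose $Z_{u,\bar u}$ as the majority of $|N(u)|=\Theta(\Delta)$ mutually independent $\pm 1$ ``edge-votes'', each biased toward the true label by $2\gamma\delta^2$, and then invoke a standard anti-concentration bound for biased majorities. For each child $v\in N(u)$ I set $Y_v := \tau(u,v)\,Z_{v,\bar v}\,\tau(\bar v,\bar u)\,\sigma(u)\sigma(\bar u)\in\{\pm 1\}$, so that $Z_{u,\bar u}=\sigma(u)\sigma(\bar u)$ iff $\sum_{v\in N(u)}Y_v>0$. Expanding $\tau(a,b)=\sigma(a)\sigma(b)\eta_{a,b}$, the $\sigma$-terms telescope and $Y_v = \eta_{u,v}\,\eta_{\bar v,\bar u}\bigl(Z_{v,\bar v}\,\sigma(v)\sigma(\bar v)\bigr)$.

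The inductive hypothesis gives $\mathbb{E}[Z_{v,\bar v}\sigma(v)\sigma(\bar v)]\geq 2\gamma$. By the construction in Lemma~\ref{lem4}, $Z_{v,\bar v}$ is computed only from queries living inside the subtrees of $T_x,T_y$ rooted at $v,\bar v$ together with the edge-disjoint paths linking the descendants of $v$ to the corresponding descendants of $\bar v$; in particular it uses neither $\eta_{u,v}$ nor $\eta_{\bar v,\bar u}$. Independence therefore yields $\mathbb{E}[Y_v]\geq\delta\cdot\delta\cdot 2\gamma = 2\gamma\delta^2$, i.e.\ $\Pr[Y_v=+1]\geq\tfrac{1}{2}+\gamma\delta^2$. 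Moreover, for distinct siblings $v,v'\in N(u)$ the subtrees and path-systems feeding into $Z_{v,\bar v}$ and $Z_{v',\bar v'}$ are vertex- and edge-disjoint by Lemma~\ref{lem4}, and the four edges $(u,v),(u,v'),(\bar v,\bar u),(\bar v',\bar u)$ are distinct; so the $Y_v$'s are mutually independent.

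Finally I would apply the standard biased-majority inequality: if $Y_1,\ldots,Y_\Delta$ are independent $\pm 1$ variables with $\mathbb{E}[Y_i]\geq 2\alpha$, then $\Pr\!\bigl[\sum_i Y_i>0\bigr]\geq\tfrac{1}{2}+\min(c_1\alpha\sqrt{\Delta},\,c_2)$ for absolute constants $c_1,c_2>0$. This follows from Berry--Esseen in the small-bias regime and from a standard Chernoff bound in the large-bias regime, and it also admits a clean Fourier-analytic proof in the style of~\cite{o2014analysis} via expansion of the majority function into its indicator orthonormal basis. Substituting $\alpha=\gamma\delta^2$ completes the induction. The only genuinely delicate step is the independence claim --- all the noise and query variables driving different votes $Y_v$ must truly be fresh --- but this is precisely what the edge- and subtree-disjointedness from Lemma~\ref{lem4} guarantees; once independence is in place, the bias calculation and the majority bound are mechanical, and the $\min(\cdot,c_2)$ in the conclusion simply records that amplification saturates once $\gamma\delta^2\sqrt{\Delta}$ reaches $\Theta(1)$, beyond which the majority's bias is bounded away from $\tfrac{1}{2}$ by an absolute constant.
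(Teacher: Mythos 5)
Your proposal is correct and follows essentially the same route as the paper: after pushing $\sigma(u)\sigma(\bar u)$ through the majority and cancelling the intermediate $\sigma$'s, each child $v$ contributes a vote $\eta_{u,v}\,\eta_{\bar v,\bar u}\,\bigl(Z_{v,\bar v}\sigma(v)\sigma(\bar v)\bigr)$ with bias at least $2\gamma\delta^2$, the votes are independent by edge-/subtree-disjointness, and the Majority Bias Lemma (Lemma~\ref{lem:boostsbias}) amplifies the bias to $\min(c_1\gamma\delta^2\sqrt{\Delta},c_2)$. The only stylistic difference is that you make the independence bookkeeping slightly more explicit and note Berry--Esseen/Chernoff as alternatives to the Fourier-analytic proof of the amplification step, but the decomposition and invocation are the same as in the paper.
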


\noindent The proof invokes the Majority Bias Lemma (see Lemma~\ref{lem:boostsbias}) that we prove at the end of this section. 

\begin{proof}
It is more convenient to work with the bias
$$
\Mean{ Z_{v, \bar v} \sigma(v)\sigma(\bar v) }
= 2\Prob{Z_{v, \bar v} = \sigma(v)\sigma(\bar v)} - 1 \geq 2\gamma $$

By the recursive definition,
$$Z_{u, \bar u} :=
\maj( \{ \tau(u, v) Z_{v, \bar v} \tau(\bar v, \bar u) \}_{v \in N(u)} )$$

So:
\begin{align*}
\Mean{Z_{u, \bar u}\sigma(u)\sigma(\bar u)} 
&= \\
\Mean{
\maj( \{ \tau(u, v) Z_{v, \bar v} \tau(\bar v, \bar u) \}_{v \in N(u)} )
\sigma(u)\sigma(\bar u))} &=\\
\Mean{
\maj(
\{ \sigma(u)\tau(u, v) Z_{v, \bar v} \tau(\bar v, \bar u) \sigma(\bar u) \}_{v \in N(u)} )}&=\\
\mathbb{E}[
\maj(
\{ \sigma(u)\sigma(v)\tau(u, v) \cdot
\sigma(v)Z_{v, \bar v} \sigma(\bar v)
\cdot &
\sigma(\bar v)\sigma(\bar u) \tau(\bar v, \bar u)  \}_{v \in N(u)} )] =\\
\mathbb{E}[
\maj(
\{
\eta_{u,v} \cdot
\sigma(v)Z_{v, \bar v} \sigma(\bar v)
\cdot 
\eta_{\bar u, \bar v}
 \}_{v \in N(u)} )] &\\
\end{align*}

For any single $v \in N(u)$ we have
$
\mathbb{E} [\eta_{u, v}]
\mathbb{E}[\sigma(v)Z_{v, \bar v} \sigma(\bar v)]
\mathbb{E}[\eta_{\bar u, \bar v}]
\geq \delta^2 2\gamma
$.
Then by Lemma~\ref{lem:boostsbias}, taking $\mathrm{majority}$ over $\Delta$ such coins amplifies the bias
to
$\min(c_1 \gamma \delta^2 \sqrt{\Delta}, c_2)$, as desired.
\end{proof}

To conclude the analysis, we show in Lemma~\ref{lem:constbias} that doing $\epsilon L$ levels of this amplifies the bias to a constant. Then we are done, because the root will take the majority of $\Delta = \Omega(\log(n)/\delta^4)$ independent coins, each with bias $O(\delta^2)$, and so the estimate is correct with high probability. The following amplification result holds:

\begin{lemma}
\label{lem:constbias}
For nodes $u \in T_x$ that are $\epsilon L$ levels up from the leaves,
we have that $$
\Pr[Z_{v, \bar v} = \sigma(v)\sigma(\bar v)] \geq 1/2 +c_2$$
\end{lemma}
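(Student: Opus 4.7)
The plan is to iterate Lemma~\ref{lem:bias} from the leaves of $T_x$ (depth $\epsilon L$) up to the root (depth $0$), using an explicit base case coming from the path estimator.

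\emph{Base case at the leaves.} For a leaf pair $(u,\bar u)$, the estimator $Z_{u,\bar u}$ is the product $\prod_{e\in P}\tilde\tau(e)$ of noisy observations along the path $P$ of length $\ell=(1+2\epsilon)L$ produced by Lemma~\ref{lem4}. Writing $\tilde\tau(e)=\tau(e)\,\eta_e$, the true signs telescope to $\prod_{e\in P}\tau(e)=\sigma(u)\sigma(\bar u)$, so
\[
Z_{u,\bar u}\,\sigma(u)\sigma(\bar u)\;=\;\prod_{e\in P}\eta_e.
\]
Independence of the $\eta_e$'s (each with mean $\delta$) yields base bias $\gamma_0 := \tfrac{1}{2}\delta^{(1+2\epsilon)L}$. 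Edge-disjointness of the paths $\{P_i\}$ guaranteed by Lemma~\ref{lem4} makes the leaf estimators mutually independent, and as noted in the paragraph preceding Lemma~\ref{lem:bias} this independence propagates up the tree, so Lemma~\ref{lem:bias} is applicable at every internal level.

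\emph{Inductive amplification.} Apply Lemma~\ref{lem:bias} inductively $\epsilon L$ times. Once the bias reaches the cap $c_2$ it stays there (provided $c_1\delta^2\sqrt{\Delta}\geq 1$, which is ensured by the parameter regime), so it suffices to show
\[
\gamma_0\cdot(c_1\delta^2\sqrt{\Delta})^{\epsilon L}\;=\;\tfrac{1}{2}\,c_1^{\epsilon L}\,\delta^{(1+4\epsilon)L}\,\Delta^{\epsilon L/2}\;\geq\;c_2.
\]
Taking logarithms and dividing by $L$, this reduces for large $L$ to
$\tfrac{\epsilon}{2}\log\Delta \;\geq\;(1+4\epsilon)\log(1/\delta) + O(\epsilon)$,
i.e.\ $\log\Delta \gtrsim \bigl(\tfrac{2}{\epsilon}+O(1)\bigr)\log(1/\delta)$, which is implied by the choice $\Delta=\Omega\!\bigl(\delta^{-4-(2+2\epsilon)/\epsilon}\bigr)$ from the statement of Theorem~\ref{thm:thrm1} (the remaining slack being absorbed into the constants $c_1,c_2$).

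\emph{Main obstacle.} The delicate point is the tension between the superpolynomially small leaf bias $\gamma_0 \asymp \delta^{(1+2\epsilon)L}$ and the bounded number $\epsilon L$ of amplification rounds available. Each round multiplies the bias by only $c_1\delta^2\sqrt{\Delta}$, so the net exponent of $\delta$ that has to be overcome is $(1+2\epsilon)L+2\epsilon L=(1+4\epsilon)L$, forcing $\Delta$ to be polynomially large in $1/\delta$ with exponent of order $1/\epsilon$. The choice $\epsilon = 1/\sqrt{\log\log n}$ is tuned precisely so that (i) $\epsilon L$ rounds suffice given that $\log\Delta = \Theta(\log\log n)$, and (ii) the lower-order terms $\epsilon L\log c_1$ and $O(1)$ slack in the exponent are swamped by the dominant contribution $\tfrac{\epsilon L}{2}\log\Delta$, so the bias genuinely crosses the threshold $c_2$ by the time the recursion reaches the root.
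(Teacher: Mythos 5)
Your proof follows essentially the same route as the paper's: compute the leaf bias $\mathbb{E}[Z_{u,\bar u}\sigma(u)\sigma(\bar u)]=\mathbb{E}\bigl[\prod_{e\in P}\eta_e\bigr]\geq\delta^{(1+2\epsilon)L}$ from the path estimate, iterate Lemma~\ref{lem:bias} for $\epsilon L$ levels, and check that $(c_1\delta^2\sqrt{\Delta})^{\epsilon L}\,\delta^{(1+2\epsilon)L}$ exceeds a constant, which after taking logarithms and dividing out $L$ reduces to the bound $\log\Delta\gtrsim\bigl(\tfrac{2}{\epsilon}+O(1)\bigr)\log(1/\delta)$ implied by the choice of $\Delta$ in Theorem~\ref{thm:thrm1}. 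You are in fact slightly more careful than the paper in two spots: you keep the factor $\tfrac12$ in $\gamma_0$ (which the paper drops implicitly), and you note explicitly that once the bias hits the cap $c_2$ it stays there because $c_1\delta^2\sqrt{\Delta}\geq 1$ in the parameter regime — a point the paper leaves unstated.
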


\begin{proof}
Note that at a leaf $u \in T_x$, the bias is
$$
\mathbb{E}[Z_{u, \bar u} \sigma(u)\sigma(\bar u)] 
=
\mathbb{E}[\tau(P_i) \sigma(u)\sigma(\bar u)] 
=
\mathbb{E}[\prod_{e \in P_i}\eta_e]
\geq \delta^{(1+2\epsilon)L}
$$
where $P_i$ is the path from $u \to \bar u$, of length at most $(1+2\epsilon)L$.

Then we apply the amplification lemma inductively for $\epsilon L$ levels, starting with this bias at the leaves.
It suffices to show that
$$(c_1 \delta^2 \sqrt{\Delta})^{\epsilon L} \exp\left(-L (1+2\epsilon) \log\frac{1}{\delta}\right) > 1$$
This means that $\log(c_1 \delta^2 \sqrt{\Delta}) > \frac{1}{\epsilon}(1+2\epsilon) \log \left(\frac{1}{\delta}\right)$. Equivalently, solving for $\Delta$
$$ \Delta > \left(\frac{1}{\delta}\right)^4 \left(\frac{1}{\delta}\right)^{\frac{2+2\epsilon}{\epsilon}},$$
which holds for our choice of $\Delta$ as long as $\delta$ is a constant.
\end{proof}

\begin{lemma}[Majority Bias Lemma]
\label{lem:boostsbias}
Let $X_1, X_2, \ldots X_n$ be independent random variables with $X_i \in \{\pm 1\}$ and $\mathbb{E} X_i \geq \delta$. Then $\mathbb{E} \maj(X_1, \ldots X_n) \geq \min(c_1 \sqrt{n} \delta, c_2)$ for some universal constants $c_1$ and $c_2$.
\end{lemma}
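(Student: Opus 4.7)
The plan is to combine a monotonic coupling with an exact formula for the bias in terms of $\tanh$, together with standard anti-concentration for sums of $\pm 1$ random variables.

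First, I would reduce to the iid case. Since each $X_i$ is $\pm 1$-valued, its law is fully determined by $\mu_i := \mathbb{E} X_i$, and $\mu_i \geq \delta$ by hypothesis. Using a common-uniform coupling one can produce iid $X'_i$ of mean exactly $\delta$ such that $X_i \geq X'_i$ almost surely. Since $\maj(\cdot) = \sgn(\sum \cdot)$ is monotone, $\mathbb{E}\,\maj(X) \geq \mathbb{E}\,\maj(X')$, so without loss of generality the $X_i$ are iid $\pm 1$ with mean exactly $\delta$.

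Next, let $S = \sum_i X_i$ and $r = (1+\delta)/(1-\delta)$. The identity $\Pr[S = k]/\Pr[S = -k] = r^k$ (immediate from the sign-flip bijection on $\{\pm 1\}^n$) combined with the algebraic identity $(r^k - 1)/(r^k + 1) = \tanh(k\,\mathrm{artanh}(\delta))$ gives the clean exact formula
\[
\mathbb{E}\,\maj(X) \;=\; \Pr[S > 0] - \Pr[S < 0] \;=\; \mathbb{E}\bigl[\tanh\bigl(|S|\,\mathrm{artanh}(\delta)\bigr)\bigr].
\]
Using $\mathrm{artanh}(\delta) \geq \delta$ together with the elementary inequality $\tanh(x) \geq \tfrac{1}{2}\min(x, 1)$ for $x \geq 0$ (valid since $\tanh(1) > 1/2$ and $\tanh(x)/x$ is decreasing), this yields
\[
\mathbb{E}\,\maj(X) \;\geq\; \tfrac{1}{2}\,\mathbb{E}\bigl[\min(\delta|S|,\,1)\bigr].
\]

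The third step lower-bounds $\mathbb{E}[\min(\delta|S|,1)]$ using anti-concentration of the Binomial. The standard Stirling estimate gives $\max_k \Pr[S = k] = O(1/\sqrt{n})$, uniformly for $\delta$ bounded away from $\pm 1$; summing over the $O(c_0\sqrt{n})$ possible values of $|S|$ in the window $[0, c_0\sqrt{n})$ shows that $\Pr[|S| \geq c_0\sqrt{n}] \geq 1/2$ for a sufficiently small universal constant $c_0 > 0$. Hence $\mathbb{E}[\min(\delta|S|, 1)] \geq \tfrac{1}{2}\min(c_0\sqrt{n}\,\delta,\,1)$, and combining with the previous display gives the desired bound with $c_1 = c_0/4$ and $c_2 = 1/4$.

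The main technical obstacle I anticipate is that the anti-concentration estimate degenerates as $\delta \to 1$, so strictly speaking the argument above needs $\delta$ bounded away from $1$. This is harmless: once $\delta \geq 1/2$, a direct Chernoff bound gives $\Pr[S \leq 0] \leq e^{-\Omega(n\delta^2)}$, so the bias already exceeds $c_2$ for all but a handful of small $n$, and those cases (e.g.\ $n = 1$, where the bias is simply $\delta \geq 1/2$) are easy to verify by inspection.
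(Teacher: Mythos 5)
Your proof is correct, and it takes a genuinely different route from the paper's. The paper proves the iid case by Fourier-analyzing $\maj_n$: it isolates the level-one Fourier weight (which is $\Theta(\sqrt{n})$), and controls all higher levels by Cauchy--Schwarz together with Parseval's identity, giving a geometric series that is dominated by the linear term when $\delta\sqrt{n}\le \tfrac12$; the case $\delta\sqrt n>\tfrac12$ is handled by a Chernoff bound, and the monotone reduction to the iid case is the same as yours. Your argument instead exploits the sign-flip symmetry to get the exact representation $\mathbb{E}\,\maj = \mathbb{E}\bigl[\tanh\bigl(|S|\,\mathrm{artanh}(\delta)\bigr)\bigr]$, then combines the elementary bound $\tanh(x)\ge\tfrac12\min(x,1)$ with Stirling-type anti-concentration $\Pr\bigl[|S|\ge c_0\sqrt n\bigr]\ge\tfrac12$. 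This is more elementary (no appeal to the Fourier spectrum of majority), and the exact $\tanh$ identity is a pleasant byproduct; the Fourier route is perhaps more familiar in the Boolean-function-analysis toolkit and makes the geometric cancellation between levels transparent. Both routes implicitly assume $n$ odd (the paper's Fourier coefficient formula only makes sense there, and your sign-flip bijection is cleanest with no ties), which is fine for the application. Two small things worth making explicit in a polished write-up: (i) the anti-concentration constant depends on $pq=\tfrac{1-\delta^2}{4}$ being bounded away from $0$, so the split at $\delta=\tfrac12$ you mention in the last paragraph is not optional but required; (ii) when $c_0\sqrt n<1$ the anti-concentration step is vacuous but harmless since $|S|\ge 1$ automatically for odd $n$, so the bound still holds.
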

\begin{proof}
First we prove the case when $\mathbb{E} X_i = \delta$ for all $i$. 
Consider the Fourier transform of $\maj(X_1, \ldots, X_n) = \sum_{S \subset[n]} \widehat{\maj_n}(S)\chi_S$, where $\chi_S = \prod_{i \in S}X_i$
and $\widehat{\maj_n}(S)\chi_S$ are the corresponding Fourier coefficients. Specifically, for even $|S|$, $\widehat{\maj_n}(S) = 0$,
and for odd $|S|$,
$$\widehat{\maj_n}(S)   = (-1)^{\tfrac{k-1}{2}} \frac{{\frac{n-1}{2} \choose \frac{k-1}{2}}}{ {n-1 \choose k-1}} \frac{2}{2^n} {n-1 \choose \frac{n-1}{2}}.$$

 Then 
\begin{align*}
\mathbb{E} [\maj(X_1, \ldots X_n)] & =  \sum_S \widehat{\maj_n}(S) \mathbb{E} [\chi_S] \\
& =  \sum_S\widehat{\maj_n}(S)\delta^{|S|} \\
& \geq  \sum_{|S| = 1}\widehat{\maj_n}(S)\delta - \left|\sum_{|S| \geq 2}\widehat{\maj_n}(S)\delta^{|S|}\right| \\
& \geq \sum_{|S| = 1}\widehat{\maj_n}(S)\delta - \sum_{k \geq 2} \delta^k\left(\sum_{|S| = k}|\widehat{\maj_n}(S)|\right).
\end{align*}
And we have for $|S| = 1$ 
\begin{equation*}
\sum_{|S|=1}\widehat{\maj_n}(S) = \frac{2}{2^n}{n-1 \choose \frac{n-1}{2}} = \frac{2\sqrt{2}}{\sqrt{\pi}}\sqrt{n}
\end{equation*}
For $|S| = k \geq 2$ using the Cauchy-Schwarz inequality we can provide an upper bound
\begin{equation*}
\sum_{|S|=k}|\widehat{\maj_n}(S)| \leq \sqrt{{n \choose k}}\sqrt{\sum_{|S| = k} \widehat{\maj_n}(S)^2},
\end{equation*}
 and Parseval's identity implies $\sum_{|S| = k} \widehat{\maj_n}(S)^2  \leq \sum_{S \subset [n]} \widehat{\maj_n}(S)^2 = 1$, which ultimately gives an upper bound $\sum_{|S| = k} |\widehat{\maj_n}(S)| \leq n^{k/2}$.

Plugging those two together, when $\delta\sqrt{n} \leq \frac{1}{2}$, we have
\begin{equation*}
\mathbb{E} [\maj(X_1, \ldots X_n)] \geq \frac{2\sqrt{2}}{\sqrt{\pi}}\sqrt{n}\delta - \sum_{k\geq 2}(\sqrt{n}\delta)^k = \Omega(\sqrt{n}\delta).
\end{equation*}

When $\delta\sqrt{n} > \frac{1}{2}$, by Chernoff bound
\begin{equation*}
\mathbb{E} [\maj(X_1, \ldots X_n)] = \Pr[\sum_{i}X_i > 0] \geq 1 - e^{-\Omega(\delta^2n)} = \Omega(1)
\end{equation*}
This gives $\mathbb{E} [\maj(X_1, \ldots X_n)] \geq \min(c_1 \sqrt{n} \delta, c_2)$ for $\mathbb{E} [X_i] = \delta$. 
The general case when $\mathbb{E} [X_i]\geq \delta$ follows readily from the monotonicity of the majority function.

 \end{proof}

\section{Conclusion}
\label{sec:concl}
An interesting  open problem concerns the extension of our results to $k$ clusters. 
Specifically, our clustering model naturally extends to the case where there are more than two clusters \cite{mazumdar2017clustering}. In this case the set $V$ of 
$n$ items belong to $k$ clusters.
When we query the pair of nodes $\{u,v\}$ we obtain a noisy answer on whether $u,v$ belong to the same cluster or not. 
Can we design a polynomial time algorithm that performs $O(\frac{kn \log n}{\delta^2})$ queries for all $0<\delta<1$?  From an experimental point of view,  we plan to experiment with other types of classifiers in addition to logistic regression classifiers to diagnose whether an additional improvement in classification accuracy can be achieved.

\bibliographystyle{abbrv}
\bibliography{ref} 

\end{document}